\definecolor{S_Blue}{RGB}{0,135,252}
\definecolor{S_Red}{RGB}{214,13,63}
\definecolor{Blue}{RGB}{47,89,151}
\definecolor{S_Grey}{RGB}{150,150,158}
\definecolor{S_Yel}{RGB}{255,204,0}
\definecolor{S_Green}{RGB}{102,204,0}
\definecolor{S_Brown}{RGB}{154,41,41}
\DeclareFontFamily{OT1}{pzc}{}
\DeclareFontShape{OT1}{pzc}{m}{it}{<-> s * [1.15] pzcmi7t}{}
\DeclareMathAlphabet{\mathpzc}{OT1}{pzc}{m}{it}
\newcommand{\tr}[1]{\operatorname{\textnormal{Tr}}\left( {#1} \right)} 
\theoremstyle{plain}
\newtheorem{theorem}{Theorem}
\newtheorem{lemma}{Lemma}
\newtheorem{corollary}{Corollary}
\newcommand{\rvline}{\hspace*{-\arraycolsep}\vline\hspace*{-\arraycolsep}}
\newcommand*{\SavedEqref}{}
\let\SavedEqref\eqref
\renewcommand*{\eqref}[1]{%
  \begingroup
    \hypersetup{
      linkcolor=CornflowerBlue,
      linkbordercolor=CornflowerBlue,
    }%
    \SavedEqref{#1}%
  \endgroup
}
\begin{document}

\title{Lower bounds on the number of rounds of the quantum approximate optimization algorithm required for guaranteed approximation ratios}

\author{Naphan Benchasattabuse}
\email{whit3z@sfc.wide.ad.jp}
\affiliation{CCS-3 Information Sciences, Los Alamos National Laboratory, Los Alamos, NM 87545, USA}
\affiliation{Graduate School of Media and Governance, Keio University, 5322 Endo, Fujisawa, Kanagawa 252-0882, Japan}
%
\author{Andreas B\"{a}rtschi}
\email{baertschi@lanl.gov}
\affiliation{CCS-3 Information Sciences, Los Alamos National Laboratory, Los Alamos, NM 87545, USA}
%
\author{Luis Pedro Garc\'{i}a-Pintos}
\email{lpgp@lanl.gov}
\affiliation{T-4 Quantum and Condensed Matter Physics, Los Alamos National Laboratory, Los Alamos, NM 87545, USA}
%
\author{John Golden}
\affiliation{CCS-3 Information Sciences, Los Alamos National Laboratory, Los Alamos, NM 87545, USA}
%
\author{Nathan Lemons}
\affiliation{T-5 Applied Mathematics and Plasma Physics, Los Alamos National Laboratory, Los Alamos, NM 87545, USA}
%
\author{Stephan Eidenbenz}
\email{eidenben@lanl.gov}
\affiliation{CCS-3 Information Sciences, Los Alamos National Laboratory, Los Alamos, NM 87545, USA}

\date{\today}

\begin{abstract}
The quantum approximate optimization algorithm, also known in its generalization as the quantum alternating operator ansatz, (QAOA) is a heuristic hybrid quantum-classical algorithm for finding high-quality approximate solutions to combinatorial optimization problems, such as maximum satisfiability. 
While the QAOA is well studied, theoretical results as to its runtime or approximation ratio guarantees are still relatively sparse.
We provide some of the first lower bounds for the number of rounds (the dominant component of QAOA runtimes) required for the QAOA.
For our main result, we (i) leverage a connection between quantum annealing times and the angles of the QAOA to derive a lower bound on the number of rounds of the QAOA with respect to the guaranteed approximation ratio. 
We apply and calculate this bound with Grover-style mixing unitaries and (ii) show that this type of QAOA requires at least a polynomial number of rounds to guarantee any constant approximation ratios for most problems.
We also (iii) show that the bound depends only on the statistical values of the objective functions, and when the problem can be modeled as a $k$-local Hamiltonian, can be easily estimated from the coefficients of the Hamiltonians.
For the conventional transverse-field mixer, (iv) our framework gives a trivial lower bound to all bounded-occurrence local cost problems and for all strictly $k$-local cost Hamiltonians matching known results that constant approximation ratio is obtainable with a constant-round QAOA for a few optimization problems from these classes.
Using our proof framework, (v) we recover the Grover lower bound for unstructured search and, with small modification, show that our bound applies to any QAOA-style search protocol that starts in the ground state of the mixing unitaries.
\end{abstract}

\maketitle

\section{Introduction}
\label{sec:introductions}

Solving optimization problems has been a cornerstone computational task, with a wide range of applications across industry, science, and technology. 
The two leading approaches in tackling optimization problems in quantum computing are the quantum annealing (QA)~\cite{kadowaki-nishimori-annealing} and the quantum approximate optimization algorithm~\cite{farhi-qaoa} and its generalization to constrained optimization problem, the quantum alternating operator ansatz algorithm (QAOA)~\cite{hadfield-qaoa}.
In order to solve an optimization problem on a quantum computer, it is natural to translate the minima or maxima finding of a given objective function to finding the ground state of a corresponding problem Hamiltonian. 
Although the QAOA does not have any proven theoretical quantum advantage or speedups over classical algorithms, several numerical simulation results hint at potential ``practical'' speedups if and when scalable quantum computers exist~\cite{john-numerical-speedup-constrained, john-sat-qaoa, boulebnane-montanaro-sat-qaoa}.
There are in fact algorithms for exact optimization with provable polynomial speedups over random guessing (brute force) with Grover-style adaptive search~\cite{durr-hoyer-minimum-finding, gilliam-grover-adaptive-search-cpbo}, but this type of algorithm does not provide near-optimal solutions or any approximation ratio guarantees.
In its simplest form, the QAOA consists of two parametrized subcircuits, the phase separator that encodes the problem Hamiltonian $H_1$ and the mixing operator (or the mixer) that represents a Hamiltonian $H_0$ that does not commute with the phase separator.
The two subcircuits are then applied alternately in rounds, the number of which is usually denoted by $p$, and the goal is to adjust the parameters to minimize the expectation value to the given problem Hamiltonian $\braket{H_1}$.

For the QA protocol, the system is initialized in an easy-to-prepare ground state of a Hamiltonian $H_0$ that does not commute with the problem Hamiltonian $H_1$, similar to the mixer of the QAOA.
The system is then driven by a time-dependent Hamiltonian, $H(t) = [1 - g(t)] H_0 + g(t) H_1$, where $g(t)$ is called the annealing schedule, where $g(0) = 0$ and $g(t_f) = 1$, with $t_f$ the total evolution time or the annealing time.
It is known from the adiabatic theorem~\cite{born-fock-adiabatic-theorem} that if the transition from $H_0$ to $H_1$ is slow enough, the state will remain close to the instantaneous ground state at all times and will arrive at the ground state of $H_1$ at the end of the protocol.
Note that although the adiabatic theorem guarantees that the final state will be the ground state of $H_1$, the timescale of the adiabatic annealing time is polynomial in the inverse of the gap between the ground state and the first excited state.
This gap is typically exponentially small, leading to an exponential time scale in the number of qubits, and thus also in the problem size.
Thus nonadiabatic annealing schedules, such as counterdiabatic driving~\cite{demirplak-rice-counterdiabatic-annealing, berry-transitionless-driving, guery-odelin-shortcuts-to-adiabaticity, del-campo-shortcuts-to-adiabaticity-counterdiabatic}, diabatic annealing~\cite{crosson-lidar-diabatic-annealing}, or the optimal control approach~\cite{guery-odelin-shortcuts-to-adiabaticity}, have been proposed that still fall into the framework of quantum annealing.

It is straightforward to see that in the limit $p \rightarrow \infty$, the QAOA is the Trotterization of the QA.
Since $p$, the number of QAOA rounds corresponding to the circuit depth, relates directly to the time complexity, it would be of interest to characterize the asymptotic scaling of $p$ with the solution quality and the problem size.
In the case of finite $p$, the intuition is not as clear, as the Trotterization argument would need to take the Trotter error into account.
On the other hand, the QAOA does not need to represent the Trotterization or discretization of any QA annealing schedule and can be thought of purely as an ansatz structure for variational algorithms~\cite{cerezo-vqa-review}.
In the finite-$p$ regime, it has been proven that certain problems can admit a fixed solution quality~\cite{farhi-qaoa, farhi-e3lin2, wurtz-love-max-cut-constant-depth-qaoa}.
Another aspect that is widely studied for the variational quantum algorithms is the expressibility of the QAOA circuits~\cite{akshay-reachability-deficit-prl}, how much of the solution space can be reached with a given number of $p$ rounds of the QAOA~\cite{akshay-circuit-depth-scaling-sat-density, niu-chuang-qaoa-state-transfer}, or the concentration of states~\cite{basso-constant-depth-qaoa-analysis, anshu-metger-concentration-bound-qaoa}.
Other connections between QA and the QAOA that have been explored are the initial-state selection~\cite{cain-qaoa-get-stuck, sack-serbyn-qaoa-initialization-via-qa, sack-greedy-qaoa-initialization, he-ruslan-alignment-qaoa} and the counterdiabatic driving in the QAOA~\cite{wurtz-love-counterdiabaticity-qaoa}.
In the other direction, QA can also simulate QAOA protocols by realizing that the alternate Hamiltonian applications of the QAOA correspond to the bang-bang schedule~\cite{innocenti-bang-bang}, i.e., the schedule where only one driving Hamiltonian is turned on at a time.
It is also conjectured that the bang-bang schedule is optimal for QA for most problems~\cite{yang-chamon-bang-bang-optimal-vqa} while some other problems might have the optimal schedule that follows the bang-anneal-bang protocol~\cite{brady-optimal-controls-bang-anneal-bang}.

In this work, we give lower bounds on the number of QAOA rounds needed to achieve a constant target approximation ratio via results from the lower bounds of annealing time derived for QA~\cite{luis-pedro-annealing-lower-bounds}.
The structure of the paper is as follows.
We begin by defining the QAOA and QA and how they work in our framework.
We outline the combinatorial optimization problems and their objective functions we consider, maximization problems with only non-negative integer values, and briefly recap the lower bound results of QA from~\cite{luis-pedro-annealing-lower-bounds} (Sec.~\ref{sec:prelim}).
We then outline the assumptions of our driving Hamiltonians and derive the lower bounds for the number of rounds $p$ in the QAOA case (Sec.~\ref{sec:main-results}).
We also show that rescaling of the Hamiltonians has no observable effect on the QAOA, which at first, sounds counterintuitive when compared to QA, and show that the scaling factor can be selected to make the bounds constantly tighter without affecting other assumptions we prior made (Sec.~\ref{sec:hamiltonian-rescaling}).
After we have the bounds established, we show that when using the Grover diffusion operator style as the mixer, the bounds depend on statistical values of the objective values (Sec.~\ref{sec:grover-mixer}).
Later we look into the most common mixer, the transverse field.
We describe that although the lower bounds for certain families of $k$-local Hamiltonians (bounded occurrence or strictly $k$-local) evaluate to trivial lower bounds, the results agree with the constant-depth QAOA literature (Sec.~\ref{sec:tf-mixer}).
Next we look into solving search problems with the QAOA and how our lower bounds behave in this special case.
We later show that a slight modification to our lower bound can recover the oracular unstructured search lower bound for both the Grover and the transverse-field mixer and also generalize to cover the continuous-time quantum walk search (Sec.~\ref{sec:search-lower-bound}).
We then end the paper with an outlook on the implications and impact on our derived bounds and how future research directions can be taken from further investigation into other properties of the system such as the coherence and the expectation value changes over the course of the protocol (Sec.~\ref{sec:discussion}).

\section{Background}
\label{sec:prelim}

In this section, we give a brief introduction to combinatorial optimization, the QAOA, and QA. 
We then outline our assumptions and constraints on the optimization problems that we consider leading to constraints on problem Hamiltonians which are used to drive the QAOA.
We give a summary of results on lower bounds on annealing time derived for QA that we later use to derive the lower bounds on the number of QAOA rounds.
We also elaborate on the connection between the QAOA and QA with a bang-bang schedule.

\subsection{Combinatorial optimization problem}
\label{sec:prelim:optimization-problem}

Here we consider combinatorial optimization problems defined on bit strings of length $n$ with an objective function that maps bit string $x$ to non-negative integer $C(x)$, where we wish to find the maximum objective value \mbox{$C_{\text{max}} = \max_z C(z)$}.
It is also useful to define the average of the objective values \mbox{$C_{\text{avg}} = \frac{1}{N} \sum_z C(z)$} and the standard deviation \mbox{$\sigma_C = \sqrt{\sum_z \frac{[C(z) - C_{\text{avg}}]^2}{N}}$}.
In this work, we only consider the case where all objective values are non-negative integers [$C(x) \in \mathbbm{N}_0$ for all $x$]. 
We use $F \subseteq \{0, 1\}^n$ to refer to the set of all feasible solution strings.
We also often refer to the size of the search space $N = |F|$, when the problem is unconstrained, the size of the feasible search space becomes $N = 2^n$.
When the context of the feasible solution set is clear or not of importance, we will refer to its size simply as $N$.
From the objective function, we define the cost Hamiltonian
\begin{align}
    H_C = \sum_{z \in F} C(z) \ket{z}\bra{z};
\end{align}
thus $\ket{z}$ is an eigenstate in the computational basis $Z^{\otimes n}$.
One of the most common ways to evaluate whether a heuristic optimization algorithm is good or bad is to determine its \emph{approximation ratio}.
In this work, we define the approximation ratio to be $\lambda \in [0, 1]$ by
\begin{align}
    \lambda = \frac{\braket{H_C}_p}{C_{\text{max}}} = \max_{\ket{\psi_p}} \frac{\braket{\psi_{p} | H_C | \psi_p}}{C_{\text{max}}},
\label{eq:approximation-ratio}
\end{align}
where $\ket{\psi_p}$ is the state obtainable from $p$ rounds of the QAOA and $\braket{H_C}_p$ is the largest expectation value that can be obtained from $p$ rounds of the QAOA with respect to the cost Hamiltonian (or the objective function).

Since it is more common to talk about Hamiltonians with zero ground-state energies, unless otherwise specified, we define our problem Hamiltonian to be
\begin{align}
    H_1 = \mathbbm{1} C_{\text{max}} - H_C.
\label{eq:h1}
\end{align}
We note that the solution to our optimization problem is now mapped to the ground state of the problem Hamiltonian $H_1$, turning the maximization problem into minimization in the problem Hamiltonian.

\subsection{QAOA}
\label{sec:prelim:qaoa}

A QAOA protocol consists of two non-commuting Hamiltonians, the problem encoding Hamiltonian $H_1$ (eq.~\eqref{eq:h1}) and a mixing Hamiltonian with zero ground-state energies $H_0$, a hyperparameter $p$ indicating how many rounds the two Hamiltonians are alternately applied, and the set of angles $\vec{\gamma}$ and $\vec{\beta}$ denoting the evolving time of $H_1$ and $H_0$ at each round.
The state after $p$ rounds of the QAOA can be defined as 
\begin{widetext}
\begin{equation}
\label{eq:qaoa-circuit}
    \ket{\psi (\vec{\beta}, \vec{\gamma})}_p = (e^{-i \beta_p H_0} e^{-i \gamma_p H_1}) (e^{-i \beta_{p-1} H_0} e^{-i \gamma_{p-1} H_1}) \cdots (e^{-i \beta_1 H_0} e^{-i \gamma_1 H_1})  \ket{\psi_0}.
\end{equation}
\end{widetext}

The goal of the QAOA is to prepare a state that minimizes $\braket{\psi (\vec{\beta}, \vec{\gamma})| H_1 | \psi (\vec{\beta}, \vec{\gamma})}_p$, which is done by alternating between sampling the states prepared by the QAOA circuit and then on classical computers evaluating the expectation value and adjusting the parameter vectors $\vec{\gamma} = (\gamma_1, \gamma_2, \hdots, \gamma_p)$ and $\vec{\beta} = (\beta_1, \beta_2, \hdots, \beta_p)$ so the expectation value moves toward the minima.
We will later refer to $\braket{\psi (\vec{\beta}, \vec{\gamma})| H_1 | \psi (\vec{\beta}, \vec{\gamma})}_p$ with the shorthand notation $\braket{H_1}_p$.

\subsection{Quantum Annealing}
\label{sec:prelim:annealing}

Quantum annealing protocols consist of a set of driving Hamiltonians and an annealing schedule, a function of time $g(t)$.
The set of driving Hamiltonians usually consists of two Hamiltonians, the problem Hamiltonian $H_1$ whose ground state we want to find and a Hamiltonian $H_0$ whose ground state is known and easy to prepare.
The system is then driven by a time-dependent Hamiltonian given by
\begin{equation}
\label{eq:annealing-schedule}
    H(t) = [1 - g(t)]H_0 + g(t) H_1,
\end{equation}
where the function $g(t)$ is called the annealing schedule that satisfies $g(0) = 0$ and $g({t_f}) = 1$, with $t_f$ the total duration of the process. 
Here we consider an annealing schedule to be successful if the probability of finding the ground state of the problem Hamiltonian $p_{0, t_f} \geq k$ where $k$ is some constant threshold.

\subsection{QAOA as a bang-bang QA}
\label{sec:prelim:qaoa-qa-connection}

Given a time-independent Hamiltonian $H$, we can write down the evolution of some initial quantum state $\ket{\psi(0)}$ with time $t$ under this Hamiltonian as
\begin{align}
    \ket{\psi(t)} = e^{-iHt} \ket{\psi(0)},
\end{align}
where $\ket{\psi(t)}$ is the state at time $t$.
From this perspective, what the QAOA does in eq.~\eqref{eq:qaoa-circuit} at each round $j$ is evolve the state by time-independent Hamiltonians $H_1$ and $H_0$ with times $\gamma_j$ and $\beta_j$, respectively.

When the driving Hamiltonians of the QAOA and QA are the same, the QAOA is essentially the bang-bang schedule, the schedule in which $g(t)$ only has two distinct values, $0$ and $1$, at all times.
The annealing time can then be defined as
\begin{align}
    t_\text{anneal} = \sum_{j=1}^p \left( |\beta_j| +  |\gamma_j| \right).
\label{eq:sum-of-angles-to-annealing-time}
\end{align}
In this view, any result that applies to general QA can also be applied to the QAOA.

\subsection{Lower bounds on annealing times}
\label{sec:prelim:lower-bound-annealing-times}

It was shown in~\cite{luis-pedro-annealing-lower-bounds} that a lower bound on the annealing time to reach the ground state with high probability can be derived independently of the trajectory of the state through the process.

\begin{theorem}[Garc\'{i}a-Pintos et al.~\cite{luis-pedro-annealing-lower-bounds}]
Given two driving Hamiltonians with zero ground-state energies $H_0$ and $H_1$,
for a quantum annealing protocol that starts in the ground state of $H_0$ and ends at time $t_{f}$,
it holds that
\begin{align}
    t_f \geq \tau_\text{anneal} \coloneqq \frac{\braket{H_0}_{t_f} + \braket{H_1}_0 - \braket{H_1}_{t_f}}{\|[H_1, H_0]\|}.
\end{align}
\label{theorem:annealing-lower-bound}
\end{theorem}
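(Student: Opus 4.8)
The plan is to monitor how the two energy expectation values $\braket{H_0}_t$ and $\braket{H_1}_t$ evolve over the protocol and to integrate their rates of change. By the Ehrenfest theorem, the state $\ket{\psi(t)}$ generated by $H(t)$ obeys $\dt \braket{A}_t = i\braket{[H(t), A]}_t$ for any time-independent observable $A$. Taking $A = H_0$ and $A = H_1$ in turn and inserting the explicit form of $H(t)$ from eq.~\eqref{eq:annealing-schedule}, the self-commutators $[H_0, H_0]$ and $[H_1, H_1]$ vanish and only the cross term survives, giving
\begin{align}
\dt \braket{H_0}_t &= i\, g(t)\, \braket{[H_1, H_0]}_t, \\
\dt \braket{H_1}_t &= -i\, \big[1 - g(t)\big]\, \braket{[H_1, H_0]}_t.
\end{align}

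Next I would integrate both equations from $0$ to $t_f$ and add the first to the negative of the second. The key structural observation --- the one responsible for the bound being independent of both the schedule $g(t)$ and the detailed trajectory of the state --- is that in this combination the weights $g(t)$ and $1-g(t)$ sum to unity, so the schedule drops out completely:
\begin{align}
\braket{H_0}_{t_f} - \braket{H_0}_0 + \braket{H_1}_0 - \braket{H_1}_{t_f} = i \int_0^{t_f} \braket{[H_1, H_0]}_t \, dt.
\end{align}
Because the protocol starts in the ground state of $H_0$ and $H_0$ has zero ground-state energy, $\braket{H_0}_0 = 0$, so the left-hand side is exactly the numerator of $\tau_\text{anneal}$.

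To finish, I would bound the right-hand side. The left-hand side is real, so I can take absolute values on both sides and apply the triangle inequality to the integral, then bound the integrand pointwise using $|\braket{[H_1, H_0]}_t| \leq \|[H_1, H_0]\|$, which holds for any operator since $|\braket{\psi|M|\psi}| \leq \|M\|$ for a normalized state. This produces $|\braket{H_0}_{t_f} + \braket{H_1}_0 - \braket{H_1}_{t_f}| \leq t_f \, \|[H_1, H_0]\|$, and dropping the absolute value via $x \leq |x|$ and rearranging yields the claimed bound $t_f \geq \tau_\text{anneal}$. The calculation itself is short, so the main thing to get right is conceptual rather than computational: recognizing the weight cancellation above, and keeping the factors of $i$ and the orientation of $[H_1, H_0]$ (as opposed to $[H_0, H_1]$) consistent throughout, since a sign slip there would invert the inequality.
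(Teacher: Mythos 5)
Your proposal is correct and takes essentially the same approach as the source: the paper imports this theorem from Garc\'{i}a-Pintos et al., whose argument is exactly this Ehrenfest-type computation --- differentiate $\braket{H_0}_t$ and $\braket{H_1}_t$ under $H(t)$, combine them so that the schedule weights $g(t)$ and $1-g(t)$ cancel, use $\braket{H_0}_0=0$, and bound the remaining integrand pointwise by $\|[H_1,H_0]\|$. The same technique is mirrored in the paper's own proof of Lemma~\ref{lemma:probability-based-bound-annealing}, so your derivation, including the sign bookkeeping, matches the intended proof.
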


Here $\braket{H}_t = \braket{\psi_t | H | \psi_t}$ is the expectation value of the evolving state $\ket{\psi_t}$ at time $t$ with respect to the Hamiltonian $H$ and $\| \cdot \|$ denotes the spectral norm.
This bound only depends on the properties of the problem Hamiltonian and the driving Hamiltonians, with the term $\braket{H_1}_{t_f}$ capturing the error to the ground state of $H_1$.

We note that there are also other variations of the lower bounds that are tighter.
For those lower bounds, which take the trajectory of the state during the annealing process, we refer readers to~\cite{luis-pedro-annealing-lower-bounds}.

\section{Lower bounds on QAOA}
\label{sec:main-results}
\label{sec:main-results:angle-lower-bound}

In this section, we give the main results, which show the relationship between the approximation ratio $\lambda$ and the number of rounds $p$ for the QAOA required to reach it.
Let $\braket{H}_l$ denote the expectation value after $l$ rounds of the QAOA.
(In the context of QA $l$ refers to the annealing time, while for the QAOA $l$ refers to the state after $l$ rounds of the QAOA.)

\subsection{Lower bounds on QAOA angles}

From the view that the QAOA is a QA with a bang-bang schedule (Sec.~\ref{sec:prelim:qaoa-qa-connection}), we can derive lower bounds on the QAOA given the angles.

\begin{lemma}
Given a phase separator Hamiltonian $H_1$ and a mixing Hamiltonian $H_0$, where both Hamiltonians have zero ground-state energies,
for a QAOA protocol with $p$ rounds driven by $H_0$ and $H_1$ with angle parameters $\vec{\gamma}$ and $\vec{\beta}$ that starts from the ground state of $H_0$, then
\begin{align}
    \sum_{j=1}^p \left( |\beta_j| +  |\gamma_j| \right) \geq \frac{\braket{H_0}_{p} + \braket{H_1}_0 - \braket{H_1}_{p}}{\|[H_1, H_0]\|}.
\end{align}
\label{lemma:qaoa-ground-state-bound}
\end{lemma}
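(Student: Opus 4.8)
The plan is to read Lemma~\ref{lemma:qaoa-ground-state-bound} as a direct specialization of Theorem~\ref{theorem:annealing-lower-bound} through the bang-bang identification of Sec.~\ref{sec:prelim:qaoa-qa-connection}. Viewing the circuit in eq.~\eqref{eq:qaoa-circuit} as quantum annealing with $g(t)$ toggling between $0$ and $1$, the total annealing time becomes $t_f = \sum_{j=1}^p(|\beta_j| + |\gamma_j|)$ as in eq.~\eqref{eq:sum-of-angles-to-annealing-time}, the initial state is the zero-energy ground state of $H_0$ by hypothesis, and both drivers have zero ground-state energy, so Theorem~\ref{theorem:annealing-lower-bound} applies with $\braket{H_1}_{t_f} = \braket{H_1}_p$ and $\braket{H_0}_{t_f} = \braket{H_0}_p$ and delivers the stated inequality immediately.

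To make the argument self-contained and, more importantly, to pin down the absolute values on the right-hand side, I would re-derive the bound at the level of the individual QAOA steps. During a phase-separation step $e^{-i\gamma_j H_1}$, the Heisenberg equation gives $\tfrac{d}{ds}\braket{H_0}_s = i\braket{[H_1, H_0]}_s$ while $\braket{H_1}$ is left invariant, so $|\Delta\braket{H_0}| \leq |\gamma_j|\,\|[H_1, H_0]\|$; symmetrically, during a mixing step $e^{-i\beta_j H_0}$ the value $\braket{H_0}$ is invariant while $|\Delta\braket{H_1}| \leq |\beta_j|\,\|[H_1, H_0]\|$. Here I use that $i[H_1,H_0]$ is Hermitian, so its expectation is real and bounded in magnitude by the spectral norm $\|[H_1, H_0]\|$ regardless of the state.

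I would then telescope these per-step estimates over all $p$ rounds. Summing the mixing contributions yields $|\braket{H_1}_p - \braket{H_1}_0| \leq \sum_j |\beta_j|\,\|[H_1, H_0]\|$ and summing the phase contributions yields $|\braket{H_0}_p - \braket{H_0}_0| \leq \sum_j |\gamma_j|\,\|[H_1, H_0]\|$. Invoking $\braket{H_0}_0 = 0$ and keeping only the favorable sign gives $\braket{H_0}_p \leq \sum_j |\gamma_j|\,\|[H_1, H_0]\|$ and $\braket{H_1}_0 - \braket{H_1}_p \leq \sum_j |\beta_j|\,\|[H_1, H_0]\|$; adding the two and dividing by $\|[H_1, H_0]\|$ produces the claim.

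The step I expect to require the most care is the sign bookkeeping rather than any computation: a negative angle corresponds to evolving under $-H_1$ (or $-H_0$), which the annealing picture reads as reversed time or a negated driver. Verifying that either reading contributes $|\gamma_j|$ or $|\beta_j|$ to the effective annealing time, and that the commutator-norm bound on the rate of change is insensitive to this direction, is precisely what licenses replacing $t_f$ in Theorem~\ref{theorem:annealing-lower-bound} by the sum of absolute values instead of a signed sum.
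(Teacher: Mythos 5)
Your first paragraph is precisely the paper's proof: the paper disposes of this lemma in a single line, citing Theorem~\ref{theorem:annealing-lower-bound} together with the bang-bang identification $t_f = \sum_{j}(|\beta_j|+|\gamma_j|)$ of eq.~\eqref{eq:sum-of-angles-to-annealing-time}. Where you go further is the per-step re-derivation, and that addition is genuinely useful rather than redundant: Theorem~\ref{theorem:annealing-lower-bound} as stated applies to a protocol $H(t)=[1-g(t)]H_0+g(t)H_1$ evolved forward in time, so a QAOA step with a negative angle (evolution under $-H_1$ or $-H_0$) is not literally an instance of it, and replacing $t_f$ by the sum of \emph{absolute values} is exactly the point a bare citation glosses over --- your closing paragraph identifies the right subtlety. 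Your telescoping argument is correct and closes it: during $e^{-i\gamma_j H_1}$ the value $\braket{H_1}$ is conserved while $\bigl|\tfrac{d}{ds}\braket{H_0}_s\bigr| = \bigl|\braket{[H_1,H_0]}_s\bigr| \le \|[H_1,H_0]\|$, and symmetrically for mixing steps, so that $\braket{H_0}_p \le \sum_j|\gamma_j|\,\|[H_1,H_0]\|$ (using $\braket{H_0}_0=0$ from the zero ground-state-energy hypothesis) and $\braket{H_1}_0-\braket{H_1}_p \le \sum_j|\beta_j|\,\|[H_1,H_0]\|$; adding these gives the claim, with the sign-insensitivity built in since $|\Delta\braket{\cdot}|\le|\text{angle}|\cdot\|[H_1,H_0]\|$ holds for either sign of the angle. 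In effect you reproduce, in the discrete bang-bang setting, the proof of the cited annealing bound: the paper's route buys brevity by deferring to a known result, while yours buys a self-contained lemma and rigorous negative-angle bookkeeping at the cost of redoing the work inside the black box.
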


\begin{proof}
This is an immediate result from Theorem~\ref{theorem:annealing-lower-bound} and eq.~\eqref{eq:sum-of-angles-to-annealing-time}.
\end{proof}

It should be noted that the assumption that the driving Hamiltonians $H_0$ and $H_1$ have zero ground-state energies does not violate the usual construction of the QAOA.
One can always shift the eigenvalues of a Hamiltonian by a constant.
From the perspective of optimization, adding a constant cost to the objective function preserves the solution quality.
The unitaries constructed from the modified Hamiltonians would only differ by a global phase.

\subsection{Lower bounds on QAOA rounds}
\label{sec:main-results:rounds-lower-bound}

Before we go on to our main results, we will add one more assumption to the driving Hamiltonians.
In addition to having zero ground-state energies, we also assume that the driving Hamiltonians are periodic.
Formally, 
\begin{align}
    e^{iH(t + \theta)} = e^{iHt},
\end{align}
where $\theta$ is the period of the Hamiltonian $H$.
For $H_1$, the problem Hamiltonian, the periodicity of $2\pi$ comes naturally since we restrict ourselves to only consider combinatorial optimization with non-negative integer objective values.
For $H_0$, the mixer, most mixing Hamiltonians used in the literature are also periodic, with only a few exceptions like the ring mixer for some fixed nearby values~\cite{hadfield-qaoa}.

Now we are ready to give the relationship between the approximation ratio $\lambda = \braket{H_C}_p/C_{\text{max}}$ and the lower bound on the number of QAOA rounds required to reach this expectation value.
\begin{theorem}
Given a classical objective function $C(x)$ for a maximization task, represented by the Hamiltonian $H_C$, encoded into the phase separator Hamiltonian $H_1 = \mathbbm{1}C_{\text{max}} - H_C$ and a mixing Hamiltonian $H_0$, where all Hamiltonians are $2\pi$ periodic with zero ground-state energies and letting $C_{\text{max}}$ and $C_{\text{avg}}$ denote the global maximum and the average of $C(x)$, respectively, if a QAOA protocol with $p$ rounds driven by $H_0$ and $H_1$ that starts from the ground state $\ket{\psi_0} = \ket{+}^{\otimes n}$ of $H_0$ reaches a state with approximation ratio $\lambda$, then
\begin{align}
p &\geq \frac{\braket{H_0}_{p} + \lambda C_{\text{max}} - C_{\text{avg}}}{4\pi \|[H_C, H_0]\|}.
\label{eq:qaoa-obj-phase-lower-bound}
\end{align}
\label{theorem:qaoa-objective-bound}
\end{theorem}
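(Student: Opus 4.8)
The plan is to discretize the continuous annealing-time bound of Lemma~\ref{lemma:qaoa-ground-state-bound} into a bound on the round count $p$ using the $2\pi$-periodicity assumption, and then to rewrite the numerator in terms of $\lambda$, $C_{\text{max}}$, and $C_{\text{avg}}$. First I would apply Lemma~\ref{lemma:qaoa-ground-state-bound} directly, since its hypotheses are met: both $H_0$ and $H_1$ have zero ground-state energy and the protocol starts in the ground state of $H_0$. This gives
\begin{align}
\sum_{j=1}^p \left(|\beta_j| + |\gamma_j|\right) \geq \frac{\braket{H_0}_p + \braket{H_1}_0 - \braket{H_1}_p}{\|[H_1, H_0]\|}.
\end{align}
Because $H_1 = \mathbbm{1}C_{\text{max}} - H_C$ and the identity commutes with everything, $[H_1, H_0] = -[H_C, H_0]$, so the denominator already equals $\|[H_C, H_0]\|$ as required.

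The key step is to bound the left-hand side from above by $4\pi p$. Since each driving Hamiltonian is assumed $2\pi$-periodic, reducing every angle modulo $2\pi$ leaves each factor in eq.~\eqref{eq:qaoa-circuit} unchanged, and hence leaves $\ket{\psi_p}$, all expectation values, and $\lambda$ invariant. I would therefore take each angle in $[0, 2\pi)$ without loss of generality, so that $|\beta_j|, |\gamma_j| \leq 2\pi$ and $\sum_{j=1}^p (|\beta_j| + |\gamma_j|) \leq 4\pi p$. Chaining this with the displayed inequality yields $p \geq [\braket{H_0}_p + \braket{H_1}_0 - \braket{H_1}_p]/(4\pi\|[H_C, H_0]\|)$.

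It then remains to identify the numerator. Using $H_1 = \mathbbm{1}C_{\text{max}} - H_C$ together with the definition of the approximation ratio in eq.~\eqref{eq:approximation-ratio}, I would compute $\braket{H_1}_p = C_{\text{max}} - \braket{H_C}_p = C_{\text{max}} - \lambda C_{\text{max}}$. For the initial term, the uniform superposition $\ket{\psi_0} = \ket{+}^{\otimes n}$ gives $\braket{H_C}_0 = \frac{1}{N}\sum_z C(z) = C_{\text{avg}}$, so $\braket{H_1}_0 = C_{\text{max}} - C_{\text{avg}}$. Substituting both, the $C_{\text{max}}$ contributions cancel and $\braket{H_0}_p + \braket{H_1}_0 - \braket{H_1}_p = \braket{H_0}_p + \lambda C_{\text{max}} - C_{\text{avg}}$, which is exactly eq.~\eqref{eq:qaoa-obj-phase-lower-bound}.

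I expect the main obstacle to be the periodicity and angle-reduction step, since this is where the continuous annealing-time bound is converted into a discrete count of rounds. Care is needed to confirm that reducing each angle modulo the period genuinely preserves the prepared state and the attained approximation ratio (so that the same $\lambda$ is realized with bounded angles), and that $2\pi$ is a legitimate common period, which relies on the integer spectrum of $H_C$ for the $\gamma$-angles and on the stated periodicity assumption for $H_0$ for the $\beta$-angles. The remaining manipulations are routine algebra, the only nontrivial input being that the specific initial state $\ket{+}^{\otimes n}$ reproduces precisely the average cost $C_{\text{avg}}$.
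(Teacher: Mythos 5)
Your proposal is correct and follows essentially the same route as the paper's proof: apply Lemma~\ref{lemma:qaoa-ground-state-bound}, use the $2\pi$-periodicity to bound each angle (hence the left-hand side by $4\pi p$), evaluate $\braket{H_1}_0 = C_{\text{max}} - C_{\text{avg}}$ and $\braket{H_1}_p = C_{\text{max}} - \lambda C_{\text{max}}$, and drop the identity shift in the commutator to get $\|[H_1,H_0]\| = \|[H_C,H_0]\|$. Your explicit justification that reducing angles modulo the period leaves $\ket{\psi_p}$ and hence $\lambda$ invariant is a slightly more careful rendering of the paper's remark that ``the largest meaningful $\beta_j$ and $\gamma_j$ are bounded by $2\pi$,'' but it is the same argument.
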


\begin{proof}
From Lemma~\ref{lemma:qaoa-ground-state-bound}, when imposing the periodicity constraints, we get
\begin{align}
    \sum_{j=1}^p \left( |\beta_j| +  |\gamma_j| \right) 
    &\geq \frac{\braket{H_0}_{p} + \braket{H_1}_0 - \braket{H_1}_{p}}{\|[H_1, H_0]\|} \\
p(2\pi + 2\pi)     
    &\geq \frac{\braket{H_0}_{p} + \braket{H_1}_0 - \braket{H_1}_{p}}{\|[H_1, H_0]\|} \label{eq:theorem-2-line-2}\\
p
    &\geq \frac{\braket{H_0}_{p} + \braket{H_1}_0 - \braket{H_1}_{p}}{4\pi \|[H_1, H_0]\|},
\end{align}
where \eqref{eq:theorem-2-line-2} comes from the fact that the two Hamiltonians are $2\pi$ periodic and thus the largest meaningful $\beta_j$ and $\gamma_j$ are bounded by $2\pi$.
Since the phase separator $H_1$ of the QAOA encodes the information of the objective value into the phase, we get
\begin{align}
\braket{H_1}_0 &= C_{\text{max}} - C_{\text{avg}}, \\
\braket{H_1}_p &= C_{\text{max}} - \lambda C_{\text{max}},\\
\braket{H_1}_0 - \braket{H_1}_p 
    &= \lambda C_{\text{max}} - C_{\text{avg}}.
\end{align}
This comes directly from the definitions of the approximation ratio [eq.~\eqref{eq:approximation-ratio}] and $H_1$ [eq.~\eqref{eq:h1}].
As for the denominator, we know that 
\begin{align}
    [A + kI, B] = [A, B]
\end{align}
holds for any constant $k$ and any matrix $A, B$. 
Because $H_1 = \mathbbm{1}C_{\text{max}} - H_C$, we get $\|[H_1, H_0]\| = \|[H_C, H_0]\|$.
\end{proof}

The above theorem also applies to constrained optimization problems, when the initial state $\ket{\psi_0}$ is the uniform superposition of all feasible states and the ground state of a constrained preserving mixing Hamiltonian $H_0$.

This result shows that the connection between the number of rounds $p$ and the approximation ratio $\lambda$ analytically applies to the general QAOA on combinatorial optimization problems independently of the choice of mixer or assumptions of the problem structure.
Although our lower bounds are not applicable to real-valued objective functions or nonperiodic mixers, it is a step forward to peek into the inner machinery and interplay between problem Hamiltonians and mixing Hamiltonians.

It should be pointed out that this lower bound also applies to families of QAOAs where the problem Hamiltonian $H_C$ is not in the set of driving Hamiltonians, e.g., the thresholded objective functions~\cite{john-threshold-qaoa}, as long as the chosen phase separator commutes with $H_C$.

\section{Rescaling of driving Hamiltonians}
\label{sec:hamiltonian-rescaling}

Earlier we made an assumption that the driving Hamiltonians must be periodic and have a period of $2 \pi$.
One could ask why this normalization of Hamiltonians makes sense.
One could also rescale all the driving Hamiltonians to have a similar range of energies, say, between zero and one, and this would be more physical, as is the case for QA.

\subsection{Changes in lower bounds}
\label{sec:hamiltonian-rescaling:lower-bounds}

Rescaling of the Hamiltonians changes their period and the range of expectation values.
Therefore, it impacts the lower bound in Theorem~\ref{theorem:qaoa-objective-bound}.
Suppose we rescale $H_0$ by $\alpha_0$ and $H_1$ by $\alpha_1$ where $\alpha_0, \alpha_1 > 0$; we get $H'_0 = \alpha_0 H_0$ and $H'_1 = \alpha_1 H_1$, and the new periods become $2\pi/\alpha_0$ and $2\pi/\alpha_1$ for $H'_0$ and $H'_1$, respectively.
The lower bound then becomes
\begin{align}
p \left(\frac{2 \pi}{\alpha_0} + \frac{2 \pi}{\alpha_1}\right) \geq \frac
    {\braket{H'_0}_p + \braket{H'_1}_0 - \braket{H'_1}_p}
    {\|[ H'_1, H'_0 ]\|} \\
p \geq \frac
    {\alpha_0 \braket{H_0}_p + \alpha_1 (\braket{H_1}_0 - \braket{H_1}_p)}
    {2\pi (\alpha_0 + \alpha_1) \|[ H_1, H_0 ]\|}.
\label{eq:qaoa-rescaling-lower-bound}
\end{align}
This allows us to analyze each Hamiltonian separately by scaling only one of the Hamiltonians to be so large that it dominates the numerator terms, for which we get two inequalities,
\begin{align}
p \geq \frac
    {\braket{H_0}_p}
    {2\pi \left|\left|\left[ H_1, H_0 \right]\right|\right|} \quad\text{ when } \alpha_0 \gg \alpha_1
\label{eq:lower-bound-h0-dominate}
\end{align}
and 
\begin{align}
p \geq \frac
    {\braket{H_1}_0 - \braket{H_1}_p}
    {2\pi \left|\left|\left[ H_1, H_0 \right]\right|\right|} \quad\text{ when } \alpha_0 \ll \alpha_1.
\label{eq:lower-bound-h1-dominate}
\end{align}
One consequence that we can directly observe is that the lower bounds cannot be adjusted to grow arbitrarily large via Hamiltonian rescaling. 
This is a desirable feature and leads to our next point.

\subsection{QAOA is invariant under Hamiltonian rescaling}
\label{sec:hamiltonian-rescaling:invariant}

If the driving Hamiltonians were to be rescaled for QA protocols, with $\alpha_j > 1$, one can realize that the time it takes to implement the same annealing schedule $g(t)$ would decrease.
If one were to have the ability to arbitrarily increase the strength with precise controls, then QA can be arbitrarily sped up.
We emphasize that this behavior does not apply to the QAOA, especially when our computational cost is the number of rounds. 
Since the effect of rescaling, as shown earlier, changes the period of the Hamiltonians, the angles $\beta$ and $\gamma$ need only be rescaled, and the number of rounds $p$ is not affected.

This implies that the lower bounds can be made tighter by adjusting $\alpha_0$ and $\alpha_1$ of eq.~\eqref{eq:qaoa-rescaling-lower-bound} to maximize the expression on the right-hand side.
Explicitly, we can write
\begin{align}
    p \geq \max_{\alpha_0, \alpha_1} \frac
     {\alpha_0 \braket{H_0}_p + \alpha_1 (\braket{H_1}_0 - \braket{H_1}_p)}
    {2\pi (\alpha_0 + \alpha_1) \|[ H_1, H_0 ]\|}.
\label{eq:worst-case-constant-bound}
\end{align}

We note that adjusting the lower bound via rescaling can only improve the constant factor and cannot improve the asymptotes compared to bounds derived from $2\pi$ period Hamiltonians.
These observations confirmed that deriving the lower bounds from the $2\pi$ period is logically sound even though it might seem counterintuitive at first when compared to the QA case.

One could ask why eq.~\eqref{eq:worst-case-constant-bound}, which represents the worst-case bound, should be used to lower bound $p$ since some other rescaling factors could make the lower bound go lower.
We emphasize again that although $\beta$ and $\gamma$ represent the evolution times of the Hamiltonians $H_0$ and $H_1$, they are rotation angles of parametrized unitaries; thus, worst-case scaling captures the lower bound for $p$.

Next we will use this bound that we derived and look at a few specific concrete examples.

\section{Lower bounds on QAOA rounds with Grover diffusion-style mixer}
\label{sec:grover-mixer}

One of the mixers that has been proposed for constrained optimization problems is the Grover mixer~\cite{andreas-grover-mixer-qaoa}, the parametrized version of the usual diffusion operator in Grover's algorithm and its amplitude amplification~\cite{grover-alg, brassard-bhmt-amplitude-amplification-estimation}.
This mixer not only preserves the state space in the space of feasible solutions, but is also invariant to permutations of the input since all states with the same objective value will always share the same amplitudes (fair sampling) if initialized with the same value.
This style of phase separator and Grover-mixer QAOA is also looked at from the perspective of amplitude amplification~\cite{satoh-spo, naphan-spo, koch-gaussian-amplitude-amplification, koch-variational-amplitude-amplification, shyamsundar-non-boolean-amplitude-amplification} in an attempt to decrease the cost of an adaptive Grover search and to gain more insights into generalization of ground-state finding via amplitude amplification techniques.
From the QAOA literature, this mixer was shown to provide better scaling for small problem size~\cite{akshay-reachability-deficit-prl, john-threshold-qaoa} than the traditional transverse field Hamiltonian due to its ability to mix rapidly~\cite{farhi-qaoa-whole-graph-typical-case, farhi-qaoa-whole-graph-worst-case, marsh-wang-optimization-with-quantum-walk-mixer, marsh-wang-quantum-walk-assisted-approximate-algorithm}.
Although this rapid mixing behavior has been empirically shown to be detrimental when the problem size is large when the classical outer loop is optimized via the expectation value~\cite{john-numerical-speedup-constrained, john-sat-qaoa}, we still find it interesting to characterize its lower bound since the optimal schedule could be difficult to find via optimizing over the expectation value~\cite{bittel-vqa-depth-qcma-hard}.

The Hamiltonian of the Grover mixer can be defined as
\begin{align}
H_\text{Grover} 
    &= \mathbbm{1} - \ket{\psi_0}\bra{\psi_0} \\
    &= \mathbbm{1} - \frac{1}{N}\sum_{x}\sum_{y} \ket{y}\bra{x},
\label{eq:grover-diffusion-hamiltonian}
\end{align}
where $x$ and $y$ are states in the feasible subspace.
Its unitary is the rotation around the uniform superposition state of all feasible states $\ket{\psi_0}$.

\begin{theorem}
Given a classical objective function $C(x)$ for a maximization task, represented by the Hamiltonian $H_C$, encoded into the phase separator Hamiltonian $H_1 = \mathbbm{1}C_{\text{max}} - H_C$ and the mixing Hamiltonian $H_\text{Grover}$, where all Hamiltonians are $2\pi$ periodic with zero ground-state energies, and letting $C_{\text{max}}$, $C_{\text{avg}}$, and $\sigma_{C}$ denote the global maximum, the average, and the standard deviation of $C(x)$, if a QAOA protocol with $p$ rounds driven by $H_\text{Grover}$ and $H_1$ starts from the ground state $\ket{\psi_0} = \ket{+}^{\otimes n}$ of $H_\text{Grover}$ and reaches a state with approximation ratio $\lambda$, then
\begin{align}
p
    &\geq \frac
    {1 - \left|\braket{\psi_0|\psi_{p}}\right|^2 + \lambda C_{\text{max}} - C_{\text{avg}}}
    { 4 \pi \sigma_{C} }.
\label{eq:grover-objective-bound}
\end{align}
\label{theorem:grover-objective-bound}
\end{theorem}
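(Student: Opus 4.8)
The plan is to specialize Theorem~\ref{theorem:qaoa-objective-bound} to the choice $H_0 = H_\text{Grover}$, which requires evaluating only the two mixer-dependent quantities appearing there: the expectation value $\braket{H_0}_p$ in the numerator and the commutator norm $\|[H_C, H_0]\|$ in the denominator. Everything else in eq.~\eqref{eq:grover-objective-bound} is inherited directly from the general bound.

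The expectation value is immediate. Writing $H_\text{Grover} = \mathbbm{1} - \ket{\psi_0}\bra{\psi_0}$ gives $\braket{H_\text{Grover}}_p = \braket{\psi_p|\psi_p} - |\braket{\psi_0|\psi_p}|^2 = 1 - |\braket{\psi_0|\psi_p}|^2$, which already supplies the first two terms in the numerator of eq.~\eqref{eq:grover-objective-bound}.

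The substantive step is to establish $\|[H_C, H_\text{Grover}]\| = \sigma_C$. Since $H_C$ commutes with $\mathbbm{1}$, the commutator collapses to $[H_C, H_\text{Grover}] = -[H_C, \ket{\psi_0}\bra{\psi_0}]$. I would introduce the (unnormalized) vector $\ket{c} \coloneqq H_C\ket{\psi_0} = \tfrac{1}{\sqrt{N}}\sum_z C(z)\ket{z}$, so that the commutator becomes the rank-two operator $\ket{\psi_0}\bra{c} - \ket{c}\bra{\psi_0}$. The key manipulation is to split off the part of $\ket{c}$ parallel to $\ket{\psi_0}$: because $\braket{\psi_0|c} = \tfrac1N\sum_z C(z) = C_{\text{avg}}$, the orthogonal remainder $\ket{c_\perp} \coloneqq \ket{c} - C_{\text{avg}}\ket{\psi_0}$ has squared norm $\braket{c_\perp|c_\perp} = \tfrac1N\sum_z C(z)^2 - C_{\text{avg}}^2 = \sigma_C^2$. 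Substituting $\ket{c} = C_{\text{avg}}\ket{\psi_0} + \ket{c_\perp}$ makes the $C_{\text{avg}}$ contributions cancel, leaving $[H_C, H_\text{Grover}] = \ket{\psi_0}\bra{c_\perp} - \ket{c_\perp}\bra{\psi_0}$.

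This operator acts nontrivially only on the two-dimensional subspace spanned by the orthogonal pair $\ket{\psi_0}$ and $\ket{c_\perp}/\sigma_C$; in that orthonormal basis it equals $\sigma_C$ times the generator $\left(\begin{smallmatrix} 0 & 1 \\ -1 & 0 \end{smallmatrix}\right)$, whose singular values are both one. Hence $\|[H_C, H_\text{Grover}]\| = \sigma_C$, and feeding both ingredients into Theorem~\ref{theorem:qaoa-objective-bound} produces eq.~\eqref{eq:grover-objective-bound}. I expect the spectral-norm evaluation to be the main obstacle: the crux is recognizing that the commutator is a rank-two anti-Hermitian operator confined to a two-dimensional subspace, so that its norm is governed entirely by the variance $\sigma_C^2$ and not by any larger-scale feature of the cost function $C$.
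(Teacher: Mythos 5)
Your proof is correct, and its skeleton matches the paper's: specialize Theorem~\ref{theorem:qaoa-objective-bound} to $H_0 = H_\text{Grover}$, evaluate $\braket{H_\text{Grover}}_p = 1 - |\braket{\psi_0|\psi_p}|^2$, and show $\|[H_C, H_\text{Grover}]\| = \sigma_C$. Where you diverge is in the spectral-norm step, and your route is arguably cleaner. The paper writes the commutator as $-\tfrac{1}{N}\bigl(\ket{C}\bra{e} - \ket{e}\bra{C}\bigr)$, where $\ket{C}$ is the vector of objective values and $\ket{e}$ the all-ones vector, and then invokes a closed-form formula from an external reference for the spectral norm of a skew-symmetric matrix of this form, $\|A\| = \sqrt{\beta N - \alpha^2}$ with $\alpha = \sum_i C(i)$ and $\beta = \sum_i C(i)^2$, which simplifies to $\sigma_C$. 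You instead perform a Gram--Schmidt split $\ket{c} = C_{\text{avg}}\ket{\psi_0} + \ket{c_\perp}$ and observe that the component of $H_C\ket{\psi_0}$ parallel to $\ket{\psi_0}$ cancels inside the commutator, leaving $\ket{\psi_0}\bra{c_\perp} - \ket{c_\perp}\bra{\psi_0}$ supported on a two-dimensional subspace, whose norm is manifestly $\|\ket{c_\perp}\| = \sigma_C$. This is self-contained (no external citation needed), and it makes transparent \emph{why} the mean drops out and only the variance survives --- the same structural fact the paper's formula encodes but does not explain. Both arguments are rigorous; yours trades the appeal to a known closed form for an elementary two-line computation, which is a genuine (if modest) improvement in readability.
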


\begin{proof}
The term $\braket{H_\text{Grover}}_p$ can be evaluated directly from eq.~\eqref{eq:grover-diffusion-hamiltonian}.
Since $H_C$ is diagonal in the computational basis and $H_\text{Grover}$ is a rank-1 projector, the term $[H_C, H_\text{Grover}]$ evaluates to a projector of rank of at most 2.
There is a closed-form formula for calculating the spectral norm of a matrix of this form given in~\cite{ipek-eigenvalue-rank-1-skew-symmetric}, which gives (refer to Appendix~\ref{appendix:grover-obj-bound-derivations} for the full derivations)
\begin{align}
    \| [H_C, H_\text{Grover}] \| = \sigma_C.
\label{eq:spectral-norm-commutator-grover-mixer}
\end{align}
Plugging these values into Theorem~\ref{theorem:qaoa-objective-bound} gives the lower bound as shown.
\end{proof}

We can see that the above lower bound depends only on the statistical value of the combinatorial optimization problem that one wishes to solve.
(Evaluations with concrete problems are shown for Max-Cut in Sec.~\ref{sec:grover-mixer:approximation-ratio-unattainable} and for $k$-local cost Hamiltonians in the following section.)
This implies that the time required to solve an optimization problem using the QAOA with the Grover mixer depends only on the distribution of the objective values and not the structure of the problem, i.e., the closeness of good and bad solutions or whether there is a structure to be exploited from the bit string representations.
In other words, given two problem instances with the same distribution of objective values, the Grover mixer treats the two as equal and indifferent to the hardness of each instance.
This can be seen as a general result from the special case of single $\beta$ and $\gamma = \pi$ observed in~\cite{naphan-spo, koch-variational-amplitude-amplification} and agrees with observations made from angle selections in the infinite-size limit~\cite{headley-wilhelm-grover-qaoa-angle-finding}.

\subsection{Lower bounds on QAOA with Grover mixer on local cost Hamiltonian}
\label{sec:grover-mixer:k-local-hamiltonian}

So far we have treated both the objective function of our combinatorial optimization problems $C(x)$ and the corresponding cost Hamiltonian $H_C$ with minimal assumptions.
One natural assumption about most combinatorial optimization problems that can be efficiently implemented on quantum computers is that they need only local interactions between a small number of bits of the entire bit string~\cite{taillard-design-heuristic-algorithm-optimization-problem}.
This means that the corresponding $H_C$ is a classical $k$-local Hamiltonian (as examples shown in~\cite{hadfield-qaoa} with constrained search space and in~\cite{lucas-ising-formulation-np} with penalty terms) and can be written as
\begin{align}
H_C 
    &= \mathbbm{1}O(m) -\sum_{\nu=1}^m \alpha_\nu H_\nu \\
    &= \mathbbm{1}O(m) -\sum_{\nu=1}^m \alpha_\nu Z_{\nu,1} Z_{\nu,2} Z_{\nu,3} \cdots Z_{\nu,t},
\label{eq:k-local-hamiltonian}
\end{align}
where $m = \text{poly}(n)$, $\alpha_j$ is some real number of order $\Theta(1)$, and $t \leq k$ for all $\nu$ (each term involves at most $k$ qubits).
It should be noted that the locality here refers to the maximum number of qubit interactions for each cost term and not the spatial arrangement of the qubits on the quantum device.

In this form, one can see that the average of objective values $C_{\text{avg}}$ pops up naturally for unconstrained problems.
If we let $\ket{s} = \ket{+}^{\otimes n}$, we have
\begin{align}
    C_{\text{avg}} = \braket{s|H_C|s} \text{, } \braket{+|Z|+} = 0,
\end{align}
implying that the coefficient of the trivial Pauli term of eq.~\eqref{eq:k-local-hamiltonian} equals the average.
We can then rewrite $H_C$ as
\begin{align}
H_C 
    &= \mathbbm{1}C_{\text{avg}} -\sum_{\nu=1}^m \alpha_\nu H_\nu \\
    &= \mathbbm{1}C_{\text{avg}} -\sum_{\nu=1}^m \alpha_\nu Z_{\nu,1} Z_{\nu,2} Z_{\nu,3} \cdots Z_{\nu,t}.
\label{eq:k-local-hamiltonian-avg}
\end{align}
Using the above insights, we can form another theorem.

\begin{theorem}
Given a classical objective function $C(x)$ for a maximization task, represented by a $k$-local Hamiltonian $H_C = C_{\text{avg}}\mathbbm{1} - \sum_{\nu=1}^m \alpha_\nu H_\nu$, where each $H_\nu$ is a product of Pauli Z operators with at most $k$ terms, each $\alpha_\nu$ is a real number denoting the weight of the term, and $C_{\text{avg}}$ and $C_{\text{max}}$ denote the average and the global maximum of $C(x)$, if a QAOA protocol with $p$ rounds driven by the phase separator $H_1 = C_{\text{max}}\mathbbm{1} - H_C$ and the mixer $H_\text{Grover}$ that starts in the state $\ket{\psi_0} = \ket{+}^{\otimes n}$ reaches a state $\ket{\psi_p}$ with approximation ratio $\lambda$, then
\begin{align}
p
    &\geq \frac
    {1 - \left|\braket{\psi_0|\psi_{p}}\right|^2 + \lambda C_{\text{max}} - C_{\text{avg}}}
    {4 \pi \sqrt{\sum_{\nu = 1}^m \alpha_\nu^2}}.
\end{align}
\label{theorem:grover-local-cost}
\end{theorem}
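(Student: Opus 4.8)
The plan is to show that this theorem is essentially Theorem~\ref{theorem:grover-objective-bound} re-expressed in terms of the coefficients $\alpha_\nu$ of the $k$-local Hamiltonian, so the only real work is to identify the standard deviation $\sigma_C$ appearing in the denominator of eq.~\eqref{eq:grover-objective-bound} with the quantity $\sqrt{\sum_{\nu=1}^m \alpha_\nu^2}$. Once that identity is established, the numerator carries over verbatim (it involves only $C_{\text{max}}$, $C_{\text{avg}}$, $\lambda$, and the overlap $|\braket{\psi_0|\psi_p}|^2$, none of which depend on the locality assumption), and the result follows by direct substitution.

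First I would compute $\sigma_C^2 = \frac{1}{N}\sum_z [C(z) - C_{\text{avg}}]^2$ directly from the representation in eq.~\eqref{eq:k-local-hamiltonian-avg}. Writing $C(z) - C_{\text{avg}} = -\sum_{\nu=1}^m \alpha_\nu \langle z | H_\nu | z\rangle$ where each $\langle z|H_\nu|z\rangle = \pm 1$ is the eigenvalue of the Pauli-$Z$ product $H_\nu$ on the computational basis state $\ket{z}$, I would expand the square to get
\begin{align}
\sigma_C^2 = \frac{1}{N}\sum_z \Bigl(\sum_{\nu} \alpha_\nu \langle z|H_\nu|z\rangle\Bigr)^2 = \sum_{\mu,\nu} \alpha_\mu \alpha_\nu \cdot \frac{1}{N}\sum_z \langle z|H_\mu|z\rangle\langle z|H_\nu|z\rangle.
\end{align}
The key step is to recognize that the inner average is an orthogonality relation: since each $H_\nu$ is a nontrivial product of Pauli-$Z$ operators, $\frac{1}{N}\sum_z \langle z|H_\mu|z\rangle\langle z|H_\nu|z\rangle = \frac{1}{N}\Tr(H_\mu H_\nu) = \delta_{\mu\nu}$, because the product $H_\mu H_\nu$ is itself a Pauli-$Z$ string that is traceless unless $\mu = \nu$ (in which case it is the identity). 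This collapses the double sum to the diagonal, yielding $\sigma_C^2 = \sum_{\nu=1}^m \alpha_\nu^2$, hence $\|[H_C, H_\text{Grover}]\| = \sigma_C = \sqrt{\sum_{\nu=1}^m \alpha_\nu^2}$ via eq.~\eqref{eq:spectral-norm-commutator-grover-mixer}. Substituting this denominator into eq.~\eqref{eq:grover-objective-bound} gives the claimed bound.

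The main subtlety I expect is the orthogonality claim $\frac{1}{N}\Tr(H_\mu H_\nu) = \delta_{\mu\nu}$, which relies on the terms $H_\nu$ being \emph{distinct} nonidentity Pauli-$Z$ strings with no repeated or overlapping contributions collapsing onto the same operator. If two labels $\mu \neq \nu$ happened to correspond to the same Pauli string, the cross term would not vanish and one would instead have to group like terms first; I would note that eq.~\eqref{eq:k-local-hamiltonian-avg} implicitly assumes the decomposition is into distinct Pauli terms (the trivial identity piece having already been absorbed into $C_{\text{avg}}$), so this is well-defined. Everything else is routine: the numerator is unchanged from Theorem~\ref{theorem:grover-objective-bound}, and no new assumptions about the mixer or the initial state are introduced beyond those already in force.
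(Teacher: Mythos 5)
Your proposal is correct and follows essentially the same route as the paper: the paper likewise reduces the theorem to Theorem~\ref{theorem:grover-objective-bound} by computing $\sigma_C^2 = \braket{s|H_C^2|s} - \braket{s|H_C|s}^2$ with $\ket{s}=\ket{+}^{\otimes n}$ and observing that cross terms vanish because only Pauli-$Z$ strings that cancel to the identity contribute, which is exactly your trace-orthogonality argument in different words (for diagonal $H_C$, $\braket{s|H_C^2|s} = \frac{1}{N}\Tr(H_C^2)$). Your remark about the $H_\nu$ needing to be distinct Pauli strings is a valid reading of an assumption the paper leaves implicit.
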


\begin{proof}
Since we have already established that $C_{\text{avg}}$ is the coefficient of the trivial Pauli term, we now show that the standard deviation $\sigma_C$ can be easily calculated from the problem Hamiltonian representation as well.
Realizing that
\begin{align}
\sigma_C^2
    &= \braket{s|H_C^2|s} - \braket{s|H_C|s}^2\\
    &= \left(C_{\text{avg}}^2 + \sum_{\nu=1}^m \alpha_\nu^2\right) - C_{\text{avg}}^2 \label{eq:exp-square}\\
    &= \sum_{\nu=1}^m \alpha_\nu^2,
\end{align}
where the first term of \eqref{eq:exp-square} is due to the fact that only when all the Pauli $Z$'s cancel out into identity that we get nonzero contributions to the sum.
\end{proof}

This formulation of the $k$-local Hamiltonian not only allows us to calculate the statistical values of the objective functions easily, but also helps estimate the lower bound.
In some cases, $C_{\text{max}}$ can also be known when turning a search problem into an optimization problem for solving with the QAOA, e.g., turning SAT into Max-SAT~\cite{boulebnane-montanaro-sat-qaoa}.
One can also bound $C_{\text{max}} - C_{\text{avg}}$ from above using the sum of the absolute value of the coefficients of $H_C$, i.e., $C_{\text{max}} - C_{\text{avg}} \leq \sum_\nu | \alpha_\nu |$.

Similar arguments can be made for Hamiltonians of constrained problems; one only needs to adjust the initial state $\ket{s}$ from a uniform superposition of all bit strings to those in the feasible subspace. 
Although the adjusted state will not be a product state nor will the expectation value over nontrivial terms disappear, both the numerator and the denominator will still be polynomials in $n$.
For example, when the feasible bit string space is constrained by a fixed Hamming weight of $q$ (e.g., $q$-densest subgraph, $q$-vertex cover, etc.), each $t$-body term of the Pauli $Z$ can be calculated using sums of binomial coefficients.
Both the average and the variance can be calculated efficiently since one only needs to calculate up to $2k$-body terms and sum all the weighted coefficients of each term.

\subsection{Achieving constant approximation ratio requires a polynomial number of rounds for certain problems with Grover mixer} 
\label{sec:grover-mixer:approximation-ratio-unattainable}

There is an interesting property that one notices from this lower bound.
Unlike the case of the QAOA with the transverse-field mixer (which we will discuss in Sec.~\ref{sec:tf-mixer}), where a constant-round QAOA can give a guarantee on the approximation ratio $\lambda$ independent of the problem size for certain problems (e.g., Max-Cut and E3LIN2) as shown in~\cite{farhi-qaoa, farhi-e3lin2, wurtz-love-max-cut-constant-depth-qaoa}, this behavior does not exist for many classes of problems when solving with the QAOA and Grover mixer.

We can form a corollary for Max-Cut to formally state this insight.
\begin{corollary}
If a QAOA protocol with $p$ rounds finds an approximate solution with an approximation ratio $\lambda$ to the Max-Cut of a graph with $|E|$ edges driven by the objective value phase separator $H_1$ and the Grover mixer $H_\text{Grover}$ that starts in the state $\ket{\psi_0} = \ket{+}^{\otimes n}$, then
\begin{align}
    p &\geq \frac{
        1 - \left|\braket{\psi_0|\psi_{p}}\right|^2 + 
      \lambda C_{\text{max}} - |E|/2}
    { 2 \pi \sqrt{|E|} }.
\end{align}
\end{corollary}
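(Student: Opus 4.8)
The plan is to specialize Theorem~\ref{theorem:grover-local-cost} to the Max-Cut objective by identifying the two statistical quantities that appear in its bound, namely $C_{\text{avg}}$ and $\sqrt{\sum_\nu \alpha_\nu^2}$. First I would write the Max-Cut cost function in its standard $2$-local form
\begin{align}
H_C = \sum_{(i,j) \in E} \frac{\mathbbm{1} - Z_i Z_j}{2} = \frac{|E|}{2}\mathbbm{1} - \frac{1}{2}\sum_{(i,j)\in E} Z_i Z_j,
\end{align}
which is already cast in the form $H_C = C_{\text{avg}}\mathbbm{1} - \sum_{\nu} \alpha_\nu H_\nu$ required by the theorem, with each $H_\nu = Z_i Z_j$ a two-body Pauli-$Z$ term.

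Next I would read off the relevant quantities directly from this representation. The constant term identifies $C_{\text{avg}} = |E|/2$, consistent with the fact that a uniformly random bit assignment cuts each edge with probability $1/2$, so that the expected cut value is exactly half the edges. Each of the $|E|$ edge terms carries weight $\alpha_\nu = 1/2$, so the denominator quantity of Theorem~\ref{theorem:grover-local-cost} evaluates to
\begin{align}
\sqrt{\sum_{\nu=1}^{|E|} \alpha_\nu^2} = \sqrt{|E|\cdot \tfrac{1}{4}} = \frac{\sqrt{|E|}}{2}.
\end{align}

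Finally, substituting $C_{\text{avg}} = |E|/2$ and $\sqrt{\sum_\nu \alpha_\nu^2} = \sqrt{|E|}/2$ into the bound of Theorem~\ref{theorem:grover-local-cost} gives
\begin{align}
p \geq \frac{1 - |\braket{\psi_0|\psi_{p}}|^2 + \lambda C_{\text{max}} - |E|/2}{4\pi \cdot \sqrt{|E|}/2},
\end{align}
and the factor of $2$ in the denominator cancels against the $4\pi$ to yield precisely the claimed bound. Since this is a direct specialization of an already established theorem, there is no genuine analytical obstacle; the only point requiring care is fixing the normalization convention of the Max-Cut Hamiltonian, as weighting each edge by $1$ rather than $1/2$ would rescale both $C_{\text{avg}}$ and every $\alpha_\nu$ and thereby shift the constant prefactor in the final inequality.
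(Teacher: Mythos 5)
Your proof is correct, and it takes a slightly different route from the paper's own proof of this corollary. The paper specializes Theorem~\ref{theorem:grover-objective-bound} directly, computing the two statistical quantities probabilistically: each edge is cut with probability $\tfrac{1}{2}$ under a uniformly random assignment, giving $C_{\text{avg}} = |E|/2$, and the pairwise independence of edge-cut indicators gives $\mathrm{Var} = |E|/4$, hence $\sigma_C = \sqrt{|E|}/2$ (with a longer combinatorial verification in Appendix~\ref{appendix:max-cut-statistical-value}). You instead specialize Theorem~\ref{theorem:grover-local-cost}, reading $C_{\text{avg}}$ and $\sqrt{\sum_\nu \alpha_\nu^2}$ off the coefficients of the $2$-local decomposition $H_C = \tfrac{|E|}{2}\mathbbm{1} - \tfrac{1}{2}\sum_{(i,j)\in E} Z_i Z_j$. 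The two routes are equivalent, since Theorem~\ref{theorem:grover-local-cost} is precisely Theorem~\ref{theorem:grover-objective-bound} combined with the identity $\sigma_C^2 = \sum_\nu \alpha_\nu^2$; in fact the paper's Appendix~\ref{appendix:max-cut-statistical-value} explicitly records your route as an ``alternate proof, which is far simpler.'' What your approach buys is brevity and mechanical checkability --- no case analysis over pairs of edges sharing zero, one, or two vertices --- whereas the paper's probabilistic argument makes the origin of $\sqrt{|E|}/2$ transparent (independence of edge indicators) without first casting the objective as a Pauli sum. Your closing caution about normalization is apt: the weight $\alpha_\nu = \tfrac{1}{2}$ is forced by the convention that $C(x)$ counts cut edges, which is exactly what makes $C_{\text{max}} \leq |E|$ and the bipartite-graph consequence the paper draws immediately after the corollary come out with the stated constants.
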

\begin{proof}
We apply the lower bound of Theorem~\ref{theorem:grover-objective-bound} to an instance of Max-Cut with the number of edges denoted by $|E|$.
Since there are four possible ways to assign a cut partition to an edge, it is easy to see that the expected cut value for each edge is $\frac{1}{2}$, and since any edges of a pair are independent of each other, the expected cut values then equals $|E|/2$.
We can use a similar line of argument to calculate the standard deviation of the cuts of a graph, which evaluates to $\sqrt{|E|}/2$. (Refer to Appendix~\ref{appendix:max-cut-statistical-value} for full derivations of $\sigma_C$.)
\end{proof}

We know that for any bipartite graph, the maximum number of cut edges will be equal to the number of edges in the graph.
Using the above corollary, we get
\begin{align}
p &\geq \frac{
        1 - \left|\braket{\psi_0|\psi_{p}}\right|^2 + 
      \lambda |E| - |E|/2}
    { 2 \pi \sqrt{|E|} } \\
  &\geq \frac{(2\lambda - 1)}{4\pi}\sqrt{|E|}.
\end{align}
Since $|E|$ is at least linear in the number of vertices $n$ when the graph is connected, this shows that there exist problem instances in the class of Max-Cut that the QAOA with the Grover mixer cannot get any approximation ratio guarantee with constant rounds and would require the number of rounds to be at least a polynomial in $n$ to obtain a constant approximation ratio.
We note that this conclusion is different and cannot be compared directly to results in~\cite{bravyi-rqaoa}, where it is shown that the constant-round QAOA with the transverse-field mixer can obtain a constant approximation ratio for a certain family of regular bipartite expander graphs due to different assumptions and choice of mixer.

Although we have only shown this formally with the Max-Cut problems, similar calculations can be made for most combinatorial optimization problems.

\subsection{Tightness of the lower bounds for optimization problems with Grover mixer}
\label{sec:grover-mixer:tightness}

One natural question is whether or not the lower bounds derived here are tight.
We argue that the lower bounds on typical optimization problems would evaluate to a polynomial in $n$ using techniques presented here.
This is in contrast to the empirical results shown in~\cite{john-numerical-speedup-constrained, john-sat-qaoa} requiring an exponential number of rounds in $n$ and also go against the intuition obtained from the low-depth QAOA regime~\cite{mcclean-low-depth-quantum-optimization} that each round would only contribute to Grover-like progress at best.

The numerical results in~\cite{john-numerical-speedup-constrained, john-sat-qaoa} are both obtained by optimizing the QAOA via a round-by-round basis, using the angles found from \mbox{$p-1$} rounds to optimize for round $p$.
We argue that this method of round-by-round angle finding might not yield the globally optimal angles for the QAOA with $p$ rounds.
Another possibility that cannot be ruled out is that optimizing over expectation value might not lead us to the optimal angles. 
It is also infeasible to show this numerically since performing a grid search over $[0, 2\pi)^{2p}$ requires an exponential amount of computational resources for a high value of $p$.

As for the intuition given from the low-depth regime~\cite{mcclean-low-depth-quantum-optimization}, it was shown that the behavior from the first round of the QAOA with the Grover mixer can change the expectation value at most $C_{\text{max}}/\sqrt{N}$ similar to the progress made by one round of Grover's search algorithm. 
We point out that behaviors beyond low depth are hard to predict (this is true even in the case of a fixed angle selection~\cite{naphan-spo, koch-variational-amplitude-amplification}) since the phase of each computational basis state will not be the same across all states like in the first round.

Since our lower bound does not eliminate the usefulness of the QAOA with the Grover mixer, and with the above reasoning, the possibility of this type of QAOA to have a significant advantage over the Grover adaptive search~\cite{durr-hoyer-minimum-finding, gilliam-grover-adaptive-search-cpbo} remains an open question.

\section{Lower bounds on QAOA rounds with transverse-field mixer}
\label{sec:tf-mixer}

The transverse-field mixer is the go-to mixer for unconstrained problems and it is the first mixer to be studied in the QAOA literature.
It originates from QA due to being the most common and natural in most annealer devices.
Not only can the transverse-field mixer be implemented with constant depth on gate-based quantum computers, it has also been shown to significantly outperform other higher-order mixers, including the Grover-mixer in random satisfiability problems~\cite{john-sat-qaoa}.
The Hamiltonian of the transverse-field mixer can be defined as
\begin{align}
    H_\text{TF} = \mathbbm{1} \frac{n}{2} - \frac{1}{2} \sum_{j=1}^n X_j,
\end{align}
where $X_j$ denotes the Pauli $X$ acting on qubit~$j$.
We can now derive the lower bound on $p$ for the objective phase separator and transverse-field mixer.

\begin{lemma}
Given a classical objective function $C(x)$ for a maximization task, represented by the Hamiltonian $H_C$, encoded into the phase separator Hamiltonian $H_1 = \mathbbm{1}C_{\text{max}} - H_C$ and the mixing Hamiltonian $H_\text{TF}$ (transverse field), where all Hamiltonians are $2\pi$ periodic with zero ground-state energies,
and letting $C_{\text{max}}$ and $C_{\text{avg}}$ denote the global maximum and the average of $C(x)$,
if a QAOA protocol with $p$ rounds driven by $H_\text{TF}$ and $H_1$ that starts from the ground state $\ket{\psi_0} = \ket{+}^{\otimes n}$ of $H_\text{TF}$ reaches a state with approximation ratio $\lambda$, then 
\begin{align}
p \geq \frac
    {\frac{n}{2} - \frac{1}{2}\sum_{j=1}^{n} \braket{\psi_p| X_j |\psi_p} + \lambda C_{\text{max}} - C_{\text{avg}}}
    {4\pi \|[H_C, H_\text{TF}]\|}.
\end{align}
\label{lemma:tf-objective-bound}
\end{lemma}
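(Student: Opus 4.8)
The plan is to obtain this bound as a direct specialization of Theorem~\ref{theorem:qaoa-objective-bound} to the mixer choice $H_0 = H_\text{TF}$. First I would confirm that $H_\text{TF}$ satisfies the hypotheses of that theorem. Since $X_j \ket{+}^{\otimes n} = \ket{+}^{\otimes n}$ for every $j$, the state $\ket{\psi_0} = \ket{+}^{\otimes n}$ is an eigenstate of $-\tfrac{1}{2}\sum_{j=1}^n X_j$ with eigenvalue $-\tfrac{n}{2}$, so that $H_\text{TF}\ket{\psi_0} = 0$; hence $\ket{\psi_0}$ is a zero-energy ground state of $H_\text{TF}$, exactly the starting state the theorem requires. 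I would then check $2\pi$-periodicity: writing the eigenvalues of $\sum_{j=1}^n X_j$ as $2k - n$ for $k = 0, 1, \ldots, n$ (with $k$ the number of $+1$ eigenvalues), the eigenvalues of $H_\text{TF}$ are $n - k \in \{0, 1, \ldots, n\}$, all nonnegative integers, so $e^{-i 2\pi H_\text{TF}} = \mathbbm{1}$ and the periodicity assumption used in Theorem~\ref{theorem:qaoa-objective-bound} is met.

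The only quantity left to evaluate is the mixer expectation $\braket{H_\text{TF}}_p = \braket{\psi_p | H_\text{TF} | \psi_p}$. By linearity of the expectation value applied to $H_\text{TF} = \mathbbm{1}\tfrac{n}{2} - \tfrac{1}{2}\sum_{j=1}^n X_j$, this is immediately
\begin{align}
\braket{H_\text{TF}}_p = \frac{n}{2} - \frac{1}{2}\sum_{j=1}^n \braket{\psi_p | X_j | \psi_p}.
\end{align}
Substituting this expression for $\braket{H_0}_p$ into the numerator of Theorem~\ref{theorem:qaoa-objective-bound}, and using $\|[H_C, H_0]\| = \|[H_C, H_\text{TF}]\|$ for the denominator, reproduces the stated inequality termwise.

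Since every step is a direct substitution into the already-established general bound, I do not expect a genuine obstacle here; the proof should amount to a couple of lines. The only point requiring a moment of care is the verification that the spectrum of $H_\text{TF}$ is integer-valued, which secures $2\pi$-periodicity via the parity argument above. I would deliberately leave the denominator $\|[H_C, H_\text{TF}]\|$ unevaluated at this stage, expecting the surrounding discussion to bound or compute it for the specific problem classes of interest (e.g.\ bounded-occurrence local cost problems and strictly $k$-local cost Hamiltonians), where this commutator norm is what ultimately drives the bound toward a trivial or nontrivial value.
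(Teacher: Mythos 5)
Your proposal is correct and follows exactly the paper's route: the paper proves this lemma in one line as an immediate consequence of Theorem~\ref{theorem:qaoa-objective-bound}, and your argument simply fills in the routine verifications (that $\ket{+}^{\otimes n}$ is the zero-energy ground state of $H_\text{TF}$, that its integer spectrum gives $2\pi$-periodicity, and the substitution $\braket{H_\text{TF}}_p = \tfrac{n}{2} - \tfrac{1}{2}\sum_j \braket{X_j}_p$). Your choice to leave $\|[H_C, H_\text{TF}]\|$ unevaluated also matches the paper, which defers that analysis to the subsequent discussion of bounded-occurrence and strictly $k$-local Hamiltonians.
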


\begin{proof}
This is an immediate consequence of Theorem~\ref{theorem:qaoa-objective-bound}.
\end{proof}

One immediate observation from this lower bound is the fact that energy from the term $\braket{H_\text{TF}}_p$ ranges between 0 to $n$, this is unlike the Grover-mixer case where the energy from the mixing Hamiltonian does not contribute significantly to the lower bound.
In the typical cases of optimization problems, we argue that $\lambda C_{\text{max}} - C_{\text{avg}}$ should dominate the numerator term.
When the energy with respect to the problem Hamiltonian $H_1$ is of order smaller than $n$, the lower bound will be dominated by the $\braket{H_\text{TF}}_p$ term.
This could come from optimization problems where the global optimum does not scale or grow sublinearly with the problem size.
We will explore the first case in this section and the latter case will be explored with the search cost function in Sec.~\ref{sec:search-lower-bound}.

We know from early results of the QAOA literature that constant-depth QAOAs achieve guarantees for approximation ratios $\lambda$ independent of the problem size for a few combinatorial optimization problems~\cite{farhi-qaoa, farhi-e3lin2, wauters-p-spin-qaoa, bravyi-rqaoa}. 
These combinatorial problems have one property in common, namely the number of interaction terms that each qubit participates in is bounded.
Our lower bounds should capture this behavior.
Thus, for the bounded-occurrence $k$-local Hamiltonians that correspond to these optimization problems with proven approximation ratio guarantees, our lower bound from Theorem~\ref{theorem:qaoa-objective-bound} should evaluate to a trivial lower bound for $p$ (i.e., less than one) that does not scale with the number of qubits or the number of cost terms.
We show in the next section a stronger result, namely, that our bounds evaluate to a trivial lower bound for all bounded-occurrence $k$-local Hamiltonian problems, thus leaving open the possibility that the QAOA with a transverse-field mixer might actually achieve constant approximation ratios for many additional optimization problems, but these guarantees have not yet been formally shown. 
Stated differently, our approach does not enable us to prove non-trivial lower bounds for bounded-occurrence $k$-local Hamiltonian problems.

While prior work has established a connection between the approximation ratio and the number of QAOA rounds (being constant)~\cite{bravyi-rqaoa}, these results are limited in scope, applying only to problem Hamiltonians with specific symmetries and a particular family of Ising models defined on expander graphs.
Moreover, the implications of their results differ from ours.
They demonstrate that constant-round QAOA performs worse than polynomial time classical algorithms for the problems they analyze, whereas we show that, without assuming any specific problem structure, the constant-round QAOA with the transverse-field mixer can still achieve constant approximation ratios.

\subsection{Bounded local problem Hamiltonians yield trivial lower bounds for QAOA with transverse-field mixer}
\label{sec:tf-mixer:bounded-occurence}

We assume that $H_C$ is a $k$-local Hamiltonian such that $k = O(1)$, where it can be expressed as a weighted sum of $m$ distinct products of Pauli $Z$ operators with spectral norm $\Theta(1)$ and each qubit participates in at most $l = O(1)$ terms.
We also assume that $C_{\text{max}} - C_{\text{avg}} = \Omega(n)$ making the contribution from $\braket{H_\text{TF}}_p$ negligible in the asymptotes.

Recalling from eq.~\eqref{eq:k-local-hamiltonian-avg} where the constant term in the problem Hamiltonian denotes $C_{\text{avg}}$ when the problem is unconstrained, we get 
\begin{gather}
    C_{\text{max}} \leq \|H_C\| = \|\mathbbm{1}C_{\text{avg}} + H_C'\| \leq C_{\text{avg}} + \|H_C'\|, \\
    C_{\text{max}} - C_{\text{avg}} \leq \|H_C'\| = O(m),
\label{eq:norm-hc-prime}
\end{gather}
where $H_C'$ is the nontrivial part of $H_C$ ($H_C$ minus multiples of identity operators).
Now if we can characterize the conditions when $\|[H_C, H_\text{TF}]\|$ is at least as large as $\|H_C'\|$, we will know when the lower bound evaluates to $p \geq 1$.

It is clear that the commutator $[H_C, H_\text{TF}]$ is also $k$-local but the number of terms can increase from $m$ terms up to $km$ terms and each qubit can participate up to $kl$ terms (originally up to $l$ terms).
We can express the commutator term as
\begin{align}
[H_C, H_\text{TF}]
    = -i \sum_{\nu=1}^m \alpha_\nu \sum_{l=1}^t Y_{\nu,l} \bigotimes_{j \neq l} Z_{\nu,j}.
\end{align}
Using Theorem 1 of Ref.~\cite{harrow-montanaro-eigenvalue-local-hamiltonian} (lower bound of the spectral norm of a traceless $k$-local Hamiltonian), we get 
\begin{align}
    \|[H_C, H_\text{TF}]\| = \Omega(km / \sqrt{kl}) = \Omega(m),
\end{align}
which holds when both $k$ and $l$ are constants independent of the number of qubits.
We can then conclude that 
\begin{align}
    p \geq \frac{C_{\text{max}} - C_{\text{avg}}}{4 \pi\|[H_C, H_\text{TF}]\|} = O(1),
\end{align}
and thus the lower bound becomes a trivial lower bound (i.e., less than one) when the problem Hamiltonian is a bounded-occurrence $k$-local Hamiltonian.

At first glance, this result may seem trivial, saying that $p$ needs to be larger than some constant.
However, recall that when $l$ is bounded by some constant, there are results, as mentioned earlier, showing guaranteed approximation ratios independent of problem sizes.
Hence, our bounds capture behaviors of the QAOA where the optimization problem is encoded into a bounded-occurrence $k$-local Hamiltonian.
The problems that are of this class are Max-Cut of bounded maximum degree, any constant bound regular graph, bounded degree $k$-SAT, or variations of odd degree Hamming weight problems (odd $p$-spin ferromagnets).
The trivial bound is obtained from our lower bounds because it captures easy problem instances of this type.

\subsection{Strictly $k$-local problem Hamiltonians yield trivial lower bounds for QAOA with transverse field}
\label{sec:tf-mixer:strictly-k-local}

In addition to the bounded-occurrence local problems, we can also make statements about strictly $k$-local Hamiltonians.
In order to show that this class of problem Hamiltonians also admits a trivial lower bound, we will construct a certain product state in which the expectation value with respect to the commutator is larger than the spectral norm of $H_C'$.

Since $H_C'$ is a classical Hamiltonian (diagonal in the computational basis), we know that there exists an eigenstate product state $\ket{s}$ which induced the norm and is from the Rayleigh quotient~\cite{horn-johnson-matrix-analysis} where
\begin{align}
    \braket{s|H_C'|s}^2 = \|H_C'\|^2.
\label{eq:rayleigh-quotient-technique}
\end{align}
Using the bit string $s$ of $\ket{s}$, we can construct $\ket{s^*} = \bigotimes_{j=1}^n \ket{s^*_j}$ by
\begin{align}
    \ket{s^*_j} = 
    \begin{cases}
        \cos(\theta) \ket{0} - i \sin(\theta) \ket{1}; \text{ if } s_j = 0, \\
        - i \sin(\theta) \ket{0} + \cos(\theta) \ket{1}; \text{ if } s_j = 1,
    \end{cases}  
\end{align}
where $\theta = -\frac{1}{2}\arctan(\sqrt{\frac{1}{k-1}})$.

With this construction, the following holds for every $j$ and $\nu$,
\begin{align}
    \braket{s^*|Y_j|s^*} &= \sqrt{\frac{1}{k}}\braket{s|Z_j|s}, \label{eq:inequality-ref}\\
    \braket{s^*|Z_j|s^*} &= \sqrt{\frac{k-1}{k}}\braket{s|Z_j|s}, \\
    \braket{s^*|[H_\nu, H_\text{TF}]|s^*} &= \braket{s|H_\nu|s}  \sqrt{k} \left(\sqrt{\frac{k-1}{k}}\right)^{k-1}.
\end{align}
This gives
\begin{align}
\|[H_C, H_\text{TF}]\| 
    &\geq |\braket{s^* | [H_C, H_\text{TF}] | s^*}| \\
    &= \|H_C'\| \sqrt{k} \left(\sqrt{\frac{k-1}{k}}\right)^{k-1} \\
    &\geq \|H_C'\| \text{ for all } k \geq 1,
\end{align}
where the inequality \eqref{eq:inequality-ref} is due to the Rayleigh quotient.
Recalling from eq.~\eqref{eq:norm-hc-prime} that the numerator term \mbox{$C_{\text{max}} - C_{\text{avg}}$} can only be as large as $\|H_C'\|$, we thus also get a trivial lower bound where the whole expression always evaluates to some value smaller than 1 from Theorem~\ref{theorem:qaoa-objective-bound} when the problem Hamiltonian is strictly $k$-local.

An example of a problem with a strictly 2-local cost Hamiltonian is the Max-Cut problem including the weighted graph variant where weights are integers (to preserve the periodicity condition).
The above result suggests that even without the bounded degree constraint (beyond regular graph families), there might be some classes other than the bounded degree of Max-Cut problems that can have a guarantee on the approximation ratio with a constant depth.

It is unclear how the lower bound would look for general $k$-local Hamiltonian problems.
However, we argue that the $\|[H_C, H_\text{TF}]\|$, in general, should be larger than $\|H_C\|$, from the intuition given in~\cite{harrow-montanaro-eigenvalue-local-hamiltonian}, stating that the degree or the number of terms a qubit participates in can decrease the norm at most on the order of a square root while increasing the number of terms should scale up the norm linearly. With this intuition, $\|[H_C, H_\text{TF}]\| \geq \sqrt{k}\|H_C\|$ should hold in the typical case.
We note that in order to get a non-trivial lower bound on $p$ with the transverse-field mixer, $\|[H_C, H_\text{TF}]\| = o(m)$ needs to hold (small $o$ notation).
We do not know of any upper bound of this order for optimization problems with local cost terms.

\section{Lower bound applied to search problems}
\label{sec:search-lower-bound}

We now consider search problems.
A search problem defined on $n$ qubits can be described as finding a bit string $z$ from a set $S = \{ z | C_\text{search}(z) = 1 \}$ where the objective function is defined as $C_\text{search}(x): \{0, 1\}^n \rightarrow \{0, 1\}$.

Although solving search problems with the QAOA is not practical in the hybrid quantum-classical algorithm sense when one needs to vary and find the angles, the minimum number of rounds to prepare a target state with a constant probability of observing a marked state is an important measure.
We know from the unstructured search lower bound results~\cite{bennett-strengths-and-weakness-of-quantum-computing, beals-lower-bounds-by-polynomials, zalka-grover-optimal} that in the black-box oracular model, searching requires $\Omega(\sqrt{N})$ queries to the oracle.
Since we consider only the QAOA, the model we consider is more constrained and the question becomes, given two parametrized unitaries (the adjustable phase oracle and the mixer), how many rounds one would need to alternately apply the two unitaries to prepare the target states. 

Since Grover's algorithm can also be viewed as the QAOA (the ansatz) because it alternates between two Hamiltonians albeit with fixed angles, our lower bound should be tight in the case of a search with the Grover mixer.
Not only do we know that Grover's algorithm~\cite{grover-alg} and its amplitude amplification generalization~\cite{brassard-bhmt-amplitude-amplification-estimation} are optimal in the unstructured search model, but a fixed angle selection strategy can also be made optimal (with constant success probability $\frac{1}{2}$) for the transverse-field mixer as well when there is only a single target state~\cite{jiang-transverse-field-search}.

We have shown in the previous two sections when the lower bound is applied to optimization problems with the Grover and the transverse-field mixer.
Although a compact expression (Theorem~\ref{theorem:grover-objective-bound}) and nontrivial lower bounds can be obtained for some problems for the QAOA with the Grover mixer, the tightness of the lower bound is still unclear.
On the other hand, we have given two conditions when the lower bound evaluates to a trivial lower bound for the transverse field case with local cost problems. 
In both conditions, it was assumed that the numerator term of Lemma~\ref{lemma:tf-objective-bound} is dominated by the contributions from the problem Hamiltonian $H_1$ ($\lambda C_{\text{max}} - C_{\text{avg}}$), while the case when the contributions from $H_0$ dominates have not been discussed.
Now we seek to give a partial answer to the tightness question regarding the Grover mixer and also address the case when $\braket{H_\text{TF}}$ dominates the expression in the transverse-field case by looking at search problems.

To be more precise, we show in Sec.~\ref{sec:search-lower-bound:grover-mixer} that the lower bound for the QAOA with search phase separator and Grover mixer is indeed tight leveraging Theorem~\ref{theorem:grover-objective-bound}, i.e., it matches the Grover lower-bound result of finding marked states in an unstructured search. 
We then focus on the QAOA for a search with the transverse-field mixer in Sec.~\ref{sec:search-lower-bound:tf-mixer}; we show that for two specific search problems, our lower bound yields two different sublinear and nonconstant lower bounds for the same number of marked states, thus showing that our proof approach leads to structure-dependent or problem-specific lower bounds for search problems. 
Finally, in Sec.~\ref{sec:search-overlap} we change our proof framework to work with state overlap expressions and in Sec.~\ref{sec:search-qaoa} recover the Grover lower bound for the QAOA with search and both Grover and transverse-field mixers (and in fact a much larger set of mixers).

\subsection{Lower bound for QAOA with Grover-mixer is tight for search}
\label{sec:search-lower-bound:grover-mixer}

We now show that the lower bound from Theorem~\ref{theorem:grover-objective-bound} is tight when applied to search problems.

\begin{corollary}
If a QAOA protocol solves a search problem with $p$ rounds, finding a marked state from $m$ marked states, with success probability $\lambda$, and is driven by the objective value phase separator $H_1$ (phase oracle with adjustable phase) and the Grover mixer $H_\text{Grover}$ that starts in the state $\ket{\psi_0} = \ket{+}^{\otimes n}$, then
\begin{align}
p   &\geq \frac{\lambda}{2\pi} \sqrt{\frac{N-m}{m}} - \frac{1}{2\pi} \sqrt{\lambda (1 - \lambda)}.
\label{eq:grover-search-bound}
\end{align}
\label{corollary:grover-search-bound}
\end{corollary}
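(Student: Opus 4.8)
The plan is to specialize Theorem~\ref{theorem:grover-objective-bound} to the search objective and then control the single state-dependent quantity that remains, the overlap $|\braket{\psi_0|\psi_p}|^2$. First I would record the statistics of the indicator objective $C_\text{search}$: with $m$ marked strings out of $N$, each marked string contributes value $1$ and the rest value $0$, so $C_\text{max}=1$, $C_\text{avg}=m/N$, and a short computation gives $\sigma_C = \sqrt{m(N-m)}/N$. Because $C_\text{max}=1$, the approximation ratio collapses to the success probability, $\lambda = \braket{H_C}_p = \sum_{z\in S}|\braket{z|\psi_p}|^2$. Substituting these values into Theorem~\ref{theorem:grover-objective-bound} gives
\begin{align}
p \geq \frac{1 - |\braket{\psi_0|\psi_p}|^2 + \lambda - m/N}{4\pi\,\sqrt{m(N-m)}/N}.
\end{align}

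The core of the argument is to turn this into a bound depending only on $\lambda$, $m$, and $N$. Since the inequality must hold for \emph{every} final state $\ket{\psi_p}$ with success probability $\lambda$, I would lower-bound the numerator by upper-bounding the overlap. To that end I split both states along the marked/unmarked partition: $\ket{\psi_0} = \sqrt{m/N}\,\ket{M} + \sqrt{(N-m)/N}\,\ket{U}$, where $\ket{M},\ket{U}$ are the normalized uniform superpositions over the marked and unmarked strings, and $\ket{\psi_p} = \sqrt{\lambda}\,\ket{\psi_M} + \sqrt{1-\lambda}\,\ket{\psi_U}$ with $\ket{\psi_M},\ket{\psi_U}$ normalized states supported on the two orthogonal subspaces. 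Orthogonality makes the inner product split cleanly, $\braket{\psi_0|\psi_p} = \sqrt{m\lambda/N}\,\braket{M|\psi_M} + \sqrt{(N-m)(1-\lambda)/N}\,\braket{U|\psi_U}$, and the triangle inequality together with $|\braket{M|\psi_M}|,|\braket{U|\psi_U}|\leq 1$ yields
\begin{align}
|\braket{\psi_0|\psi_p}| \leq \sqrt{\tfrac{m\lambda}{N}} + \sqrt{\tfrac{(N-m)(1-\lambda)}{N}}.
\end{align}
This is the worst case (largest possible overlap, hence smallest numerator), so the resulting lower bound is valid for all admissible $\ket{\psi_p}$.

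Finally I would square this estimate and substitute it back. The cross term contributes $-\tfrac{2}{N}\sqrt{m(N-m)\lambda(1-\lambda)}$, while the diagonal terms combine with $\lambda - m/N$ to give $2\lambda(N-m)/N$; dividing by $4\pi\sqrt{m(N-m)}/N$ and splitting the fraction produces exactly $\frac{\lambda}{2\pi}\sqrt{(N-m)/m} - \frac{1}{2\pi}\sqrt{\lambda(1-\lambda)}$, matching \eqref{eq:grover-search-bound}. The main obstacle is the overlap estimate: one must justify that maximizing $|\braket{\psi_0|\psi_p}|^2$ under the fixed marked-subspace weight $\lambda$ is genuinely attained by the subspace-aligned configuration above, so that the inequality is the true worst case and the round bound holds independently of the protocol's trajectory. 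The remaining steps are purely the statistical identities for $C_\text{search}$ and routine algebraic simplification.
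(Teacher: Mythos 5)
Your proposal is correct and takes essentially the same route as the paper's proof: specialize Theorem~\ref{theorem:grover-objective-bound} to the search objective with $C_{\text{max}}=1$, $C_{\text{avg}}=m/N$, $\sigma_C=\sqrt{m(N-m)}/N$, and evaluate the overlap $|\braket{\psi_0|\psi_p}|^2$ by splitting both states along the marked/unmarked partition. The only difference is that the paper asserts the aligned-configuration value of the overlap as an equality whereas you derive it as an upper bound via the triangle inequality---which is in fact the logically correct direction and all that is needed (a larger overlap only weakens the numerator, so the bound on $p$ survives), meaning the ``attainment'' issue you flag as the main obstacle is immaterial to the validity of the corollary.
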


\begin{proof}
We can see that, by rewriting the states as linear combinations of solution and nonsolution states, we get
\begin{gather}
    \left|\braket{\psi_0|\psi_p}\right|^2 = \frac{1}{N}\left( \sqrt{\lambda m} + \sqrt{(1 - \lambda)(N-m)} \right)^2, \label{eq:final-state-overlap-search}\\
    C_{\text{max}} = 1, \text{  } C_{\text{avg}} = m/N.
\end{gather}
The standard deviation for the search cost function is
\begin{align}
\sigma_{C} 
    &= \frac{\sqrt{m(N-m)}}{N}.
\label{eq:standard-deviation-search}
\end{align}
Plugging these into the lower bound in Theorem~\ref{theorem:grover-objective-bound} gives this expression, thus completing the proof.
\end{proof}

We can see that the expression in~\eqref{eq:grover-search-bound} recovers Grover's search scaling of order $\sqrt{N/m}$ from the lower bound in Theorem~\ref{theorem:grover-objective-bound} when the objective function is search and $m \ll N$.
This is consistent with the tight lower bound of an oracular unstructured search~\cite{brassard-bhmt-amplitude-amplification-estimation, bennett-strengths-and-weakness-of-quantum-computing, zalka-grover-optimal}.

\subsection{Search with transverse field QAOA}
\label{sec:search-lower-bound:tf-mixer}

We will now take a look at search problems with the transverse-field mixer.
Since the transverse-field mixer is not invariant under input permutation, we calculate the lower bound from Lemma~\ref{lemma:tf-objective-bound} with two constructed problems where the target states are marked differently.
(Proofs of the corollaries in this section are provided in Appendix~\ref{appendix:tf-search-derivations}.)

It is helpful to think about mixers as Laplacian of graphs~\cite{mcclean-low-depth-quantum-optimization}, and in this picture, the transverse-field mixer represents the $n$-dimensional hypercube graph.
Realizing this fact, we can think of the search set construction as marking vertices on the hypercube graph.

\subsubsection{Search on distance-3 independent set of hypercube}

First, consider a set $S_\text{dist-3}$ of marked states, where the Hamming distance between any two marked states is at least 3.
In other words, to get to another solution (if it exists) from a solution, one needs to perform at least 3 bit flips.
With this construction, we can derive the lower bound of the QAOA search with $H_\text{TF}$.

\begin{corollary}
Let $S_\text{dist-3}$ be the set of marked states of size $m$, where the Hamming distance between any two marked states is at least 3.
If a QAOA protocol solves a search problem with $p$ rounds, where marked states are defined by $S_\text{dist-3}$, with success probability $\lambda > \frac{1}{2}$ driven by the objective value phase separator $H_1$ (phase oracle with adjustable phase) and the transverse-field mixer $H_\text{TF}$ that starts in the state $\ket{\psi_0} = \ket{+}^{\otimes n}$, then
\begin{align}
p   &\geq \frac
    {n(1 - 2\sqrt{\lambda(1-\lambda)}) + 2\lambda - 2m/N}
    {4\pi \sqrt{n}}.
\end{align}
\label{corollary:tf-search-3-set-bound}
\end{corollary}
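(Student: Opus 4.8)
The plan is to specialize Lemma~\ref{lemma:tf-objective-bound} to the search cost function and then evaluate its three ingredients. For the search problem the cost Hamiltonian is the projector $H_C=\sum_{z\in S_\text{dist-3}}\ket{z}\bra{z}$ onto the marked set, so $C_\text{max}=1$, the average is $C_\text{avg}=m/N$, and the approximation ratio $\lambda=\braket{H_C}_p$ is exactly the probability of sampling a marked state. With these substitutions the numerator of Lemma~\ref{lemma:tf-objective-bound} becomes $\braket{H_\text{TF}}_p+\lambda-m/N$, so it remains to (i) compute the commutator norm $\|[H_C,H_\text{TF}]\|$ in the denominator and (ii) lower bound the mixer energy $\braket{H_\text{TF}}_p$.

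First I would compute the commutator norm, where the distance-3 condition does the work. Writing $A=\sum_j X_j$ so that $H_\text{TF}=\tfrac n2\mathbbm{1}-\tfrac12 A$, one has $[H_C,H_\text{TF}]=-\tfrac12\sum_j[H_C,X_j]$, and a direct evaluation of matrix elements shows that $[H_C,X_j]$ only connects a marked state $z$ to its single-bit-flip neighbor $z\oplus e_j$ with amplitude $\pm1$. Because any two marked states are at Hamming distance at least $3$, their neighborhoods (the $n$ leaves $z\oplus e_j$) are pairwise disjoint and contain no marked state, so the commutator is a direct sum of one ``star'' operator per marked center, each acting on the $(n+1)$-dimensional span of $z$ and its leaves. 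Each star is $\tfrac12$ times a skew-symmetric star graph $K_{1,n}$, whose nonzero eigenvalues are $\pm i\sqrt n$; hence every block has spectral norm $\sqrt n/2$ and $\|[H_C,H_\text{TF}]\|=\sqrt n/2$, which turns the denominator into $4\pi\cdot\sqrt n/2=2\pi\sqrt n$.

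The remaining and central task is to show $\braket{H_\text{TF}}_p\ge\frac n2\bigl(1-2\sqrt{\lambda(1-\lambda)}\bigr)$, equivalently $\braket{A}_p\le 2n\sqrt{\lambda(1-\lambda)}$. I would split the final state as $\ket{\psi_p}=\ket{\psi_S}+\ket{\psi_{\bar S}}$ into its marked and unmarked parts, with $\|\psi_S\|^2=\lambda$ and $\|\psi_{\bar S}\|^2=1-\lambda$. The distance-3 condition forces the marked-marked block of $A$ to vanish ($\braket{\psi_S|A|\psi_S}=0$, since marked states are mutually non-adjacent and $A$ has no diagonal), so $\braket{A}_p=\braket{\psi_{\bar S}|A|\psi_{\bar S}}+2\,\mathrm{Re}\braket{\psi_S|A|\psi_{\bar S}}$. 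The cross term is a coherence between the weight-$\lambda$ and weight-$(1-\lambda)$ sectors: a Cauchy-Schwarz estimate using that the marked-to-leaf coupling block has norm exactly $\sqrt n$ (again from the disjoint stars) produces the characteristic two-level factor $2\sqrt{\lambda(1-\lambda)}$. Finally, feeding $\|[H_C,H_\text{TF}]\|=\sqrt n/2$ and the mixer-energy bound back into Lemma~\ref{lemma:tf-objective-bound} and clearing the factor of $2$ yields the stated inequality.

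The step I expect to be the main obstacle is controlling the unmarked-unmarked contribution $\braket{\psi_{\bar S}|A|\psi_{\bar S}}$: on its own it can be as large as $n(1-\lambda)$, since the unmarked subspace still contains directions with near-maximal adjacency eigenvalue, and naively adding this to the cross term overshoots the clean coherence bound. The hypothesis $\lambda>\tfrac12$ is what makes $1-2\sqrt{\lambda(1-\lambda)}=(\sqrt\lambda-\sqrt{1-\lambda})^2$ strictly positive, so the bound grows with $n$; I expect the delicate part is showing that the unmarked block and the cross term cannot simultaneously be extremal, so that their combination is genuinely capped by $2n\sqrt{\lambda(1-\lambda)}$ rather than by the looser sum of the two individual maxima. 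This is where the isolation of the marked states (their mutually disjoint stars) must be used quantitatively, presumably by reducing the constrained maximization of $\braket{A}_p$ to an effective two-level problem.
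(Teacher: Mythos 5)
Your overall route is the paper's: specialize Lemma~\ref{lemma:tf-objective-bound} with $C_\text{max}=1$, $C_\text{avg}=m/N$, and your commutator computation is exactly the paper's (the distance-3 condition makes $[H_C,H_\text{TF}]$ one half times a disjoint union of signed stars $K_{1,n}$, so by Lemma~\ref{lemma:spectral-radius-of-star-graph} its norm is $\sqrt n/2$). The one genuine gap is that you never establish the mixer-energy bound $\braket{A}_p\le 2n\sqrt{\lambda(1-\lambda)}$, where $A=\sum_j X_j$; you flag it as the expected main obstacle and speculate that one must show the unmarked block and the cross term ``cannot simultaneously be extremal.'' That worry is unfounded: under the corollary's hypothesis $\lambda>\tfrac12$, the two naive estimates you already wrote down close the argument. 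Both are correct --- the distance-3 condition makes the leaves of distinct stars disjoint, so $\|A\ket{\psi_S}\|=\sqrt{n}\,\|\psi_S\|=\sqrt{n\lambda}$, and the unmarked block is at most $\|A\|\,(1-\lambda)=n(1-\lambda)$ --- and since $\sqrt{1-\lambda}<\sqrt{\lambda}$ implies $1-\lambda<\sqrt{\lambda(1-\lambda)}$, they combine as
\begin{align}
\braket{A}_p \;\le\; n(1-\lambda)+2\sqrt n\,\sqrt{\lambda(1-\lambda)} \;<\; \bigl(n+2\sqrt n\bigr)\sqrt{\lambda(1-\lambda)} \;\le\; 2n\sqrt{\lambda(1-\lambda)},
\end{align}
the last step requiring only $2\sqrt n\le n$, i.e., $n\ge 4$. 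So your proof is complete after this one line, at the mild price of assuming $n\ge4$; no reduction to an effective two-level problem is needed.

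The paper avoids even the $n\ge4$ caveat by running a finer, per-qubit version of your decomposition. For each $j$ it writes $\ket{\psi_p}=\sqrt{\lambda}\ket{\phi_0}+\alpha\ket{\phi_1}+\beta\ket{\phi_2}$, with $\ket{\phi_0}$ in the marked span, $\ket{\phi_1}$ in its image under $X_j$ (the $j$-flipped marked span), and $\ket{\phi_2}$ orthogonal to both. The distance condition kills every matrix element of $X_j$ except the $\phi_0$--$\phi_1$ coherence and the $\phi_2$--$\phi_2$ block, so $\braket{X_j}_p\le 2\sqrt{\lambda}\,|\alpha|+|\beta|^2\le 2\sqrt{\lambda}\,|\alpha|+1-\lambda-|\alpha|^2=1-(\sqrt{\lambda}-|\alpha|)^2\le 2\sqrt{\lambda(1-\lambda)}$, the last inequality using $|\alpha|\le\sqrt{1-\lambda}\le\sqrt{\lambda}$, which is where $\lambda>\tfrac12$ enters. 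The gain over your aggregate split is that the cross term carries only the amplitude $|\alpha|$ sitting in the bit-$j$-flipped marked subspace rather than the full $\sqrt{1-\lambda}$, and completing the square trades it off exactly against the orthogonal block, yielding the clean bound $2\sqrt{\lambda(1-\lambda)}$ per qubit for every $n$; summing over $j$ and substituting, together with $C_\text{max}=1$, $C_\text{avg}=m/N$, and the commutator norm, gives the corollary as stated.
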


We can see that the lower bound above evaluates to around $p \geq \sqrt{n}$ independently of the number of marked states.
We defer the discussion of the above lower bound to the end of this section (Sec.~\ref{sec:search-lower-bound:tf-mixer:tightness}) to note the differences between the two  problems.

\subsubsection{Search for states with constant Hamming weight}

Now we construct another set of marked states, $S_\text{Hamming-k}$, where all the states having Hamming weight equal to $k$ are marked (or the Hamming distance from all 0 bit strings equal to $k$).

\begin{corollary}
Let $S_\text{Hamming-k}$ be the set of marked states of size $m$, where all states with Hamming weight $k$ are marked.
If a QAOA protocol solves a search problem with $p$ rounds, where marked states are defined by $S_\text{Hamming-k}$, with success probability $\lambda > \frac{1}{2}$ driven by the objective value phase separator $H_1$ (phase oracle with adjustable phase) and the transverse-field mixer $H_\text{TF}$ that starts in the state $\ket{\psi_0} = \ket{+}^{\otimes n}$, then
\begin{align}
    p &\geq \frac
    {n(1 - 2\sqrt{\lambda(1-\lambda)}) + 2\lambda - 2m/N}
    {4\pi \sqrt{2k(n-k) + n}}.
\end{align}
\label{corollary:tf-search-hamming-layer-bound}
\end{corollary}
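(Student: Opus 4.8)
The plan is to specialize Lemma~\ref{lemma:tf-objective-bound} to the marked set $S_\text{Hamming-k}$, so that the only work is to evaluate the three problem-dependent quantities appearing there: the statistics $C_\text{max},C_\text{avg}$, the commutator norm $\|[H_C,H_\text{TF}]\|$, and a lower bound on the mixer energy $\braket{H_\text{TF}}_p=\tfrac n2-\tfrac12\sum_j\braket{\psi_p|X_j|\psi_p}$. The first is immediate: for a search cost function $C_\text{max}=1$ and $C_\text{avg}=m/N$, so the numerator of Lemma~\ref{lemma:tf-objective-bound} becomes $\braket{H_\text{TF}}_p+\lambda-m/N$.

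For the commutator I would write $H_C=P_k:=\sum_{|z|=k}\ket{z}\bra{z}$ and $H_\text{TF}=\tfrac n2\mathbbm{1}-\tfrac12 W$ with $W=\sum_j X_j$, so that $[H_C,H_\text{TF}]=-\tfrac12[P_k,W]$. Since a single bit flip changes the Hamming weight, $W$ never maps the weight-$k$ layer into itself, i.e.\ $P_k W P_k=0$; hence $[P_k,W]$ is block off-diagonal between the marked subspace and its complement and its norm equals $\|P_k W(\mathbbm{1}-P_k)\|$. Using $W P_k\in\mathrm{range}(\mathbbm{1}-P_k)$, I get $\|P_k W(\mathbbm{1}-P_k)\|^2=\|P_k W^2 P_k\|$, and from $W^2=n\mathbbm{1}+\sum_{j\ne l}X_jX_l$ together with $P_kWP_k=0$ it follows that $P_k W^2 P_k=nP_k+P_k\bigl(\sum_{j\ne l}X_jX_l\bigr)P_k$. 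Restricted to the weight-$k$ layer, $\sum_{j\ne l}X_jX_l$ equals twice the adjacency operator of the Johnson-type graph whose edges swap one set with one unset bit; its entries are nonnegative, so by Perron--Frobenius the top eigenvalue is attained by the uniform (Dicke) state $\ket{D_k}$ and equals $2k(n-k)$. This gives $\|[H_C,H_\text{TF}]\|=\tfrac12\sqrt{2k(n-k)+n}$, and $4\pi\|[H_C,H_\text{TF}]\|$ produces the stated denominator after clearing a factor of two against the numerator.

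The substantive step is the mixer-energy bound $\braket{H_\text{TF}}_p\ge\tfrac n2\bigl(1-2\sqrt{\lambda(1-\lambda)}\bigr)$, which is precisely the numerator shared with Corollary~\ref{corollary:tf-search-3-set-bound}. The idea is to exploit that the weight-$k$ layer is an independent set of the hypercube, so $P_kWP_k=0$ and therefore $\braket{M|H_\text{TF}|M}=\tfrac n2$ for every state $\ket{M}$ supported on the marked subspace; equivalently every marked state sits at the mid-spectrum energy $n/2$ of $H_\text{TF}$, whereas the initial state $\ket{+}^{\otimes n}$ is the unique zero mode. Decomposing $\ket{\psi_p}=\sqrt{\lambda}\ket{M}+\sqrt{1-\lambda}\ket{U}$ with $\ket{M}$ marked and $\ket{U}$ orthogonal to the marked subspace, I would write $\braket{H_\text{TF}}_p=\tfrac n2\lambda+(1-\lambda)\braket{U|H_\text{TF}|U}+2\sqrt{\lambda(1-\lambda)}\,\mathrm{Re}\braket{M|H_\text{TF}|U}$ and then minimize this over the admissible $\ket{M},\ket{U}$; for $\lambda>\tfrac12$ the target is to show the minimum is at least $\tfrac n2(\sqrt{\lambda}-\sqrt{1-\lambda})^2=\tfrac n2\bigl(1-2\sqrt{\lambda(1-\lambda)}\bigr)$. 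Substituting all three ingredients into Lemma~\ref{lemma:tf-objective-bound} then yields the claim.

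I expect the mixer-energy bound to be the main obstacle. A naive Cauchy--Schwarz estimate of the cross term $\braket{M|H_\text{TF}|U}$ collapses to the trivial $\braket{H_\text{TF}}_p\ge0$, so the argument must use more than positivity of $H_\text{TF}$ and the identity $P_kH_\text{TF}P_k=\tfrac n2 P_k$; the delicate point is to extract an $\Omega(n)$ penalty that is moreover independent of $\|[H_C,H_\text{TF}]\|$, so that the same numerator can serve both search constructions. For the permutation-symmetric set $S_\text{Hamming-k}$ one can make this rigorous by restricting to the symmetric (Dicke) subspace, where $W$ acts as a spin-$n/2$ operator and the constrained minimization of $\braket{H_\text{TF}}_p$ at fixed overlap with $\ket{D_k}$ reduces to a one-dimensional problem; verifying that the resulting minimum dominates $\tfrac n2\bigl(1-2\sqrt{\lambda(1-\lambda)}\bigr)$ is the only place where genuine computation is required.
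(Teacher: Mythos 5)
Your evaluation of the statistics ($C_\text{max}=1$, $C_\text{avg}=m/N$) and of the commutator norm is correct, and the commutator computation is a genuinely different---and arguably cleaner---route than the paper's: the paper identifies $[H_{S_\text{Hamming-k}},H_\text{TF}]$ with (half) the signed adjacency matrix of the three-layer induced subgraph $Q_n[V_{k-1}\cup V_k\cup V_{k+1}]$ and computes its spectral radius through a block Perron--Frobenius calculation (Lemma~\ref{lemma:spectral-radius-of-3-layer-subgraph-hypercube}), whereas your identity $\|[P_k,W]\|^2=\|P_kW^2P_k\|$ (valid precisely because $P_kWP_k=0$) reduces everything to the Perron eigenvalue $k(n-k)$ of the Johnson graph and yields the same value $\tfrac12\sqrt{2k(n-k)+n}$ with less machinery.

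However, the mixer-energy bound $\braket{H_\text{TF}}_p\ge\tfrac n2\bigl(1-2\sqrt{\lambda(1-\lambda)}\bigr)$, which you correctly flag as the substantive step, is left unproven, and this is a genuine gap: it is the heart of the corollary, not a routine verification. Moreover, the route you sketch is insufficient as stated. In the symmetric subspace, a Cauchy--Schwarz estimate of the cross term (discarding $\braket{U|H_\text{TF}|U}\ge 0$) gives only $\braket{H_\text{TF}}_p\ge \lambda\tfrac n2-\sqrt{\lambda(1-\lambda)}\sqrt{2k(n-k)+n}$, and for $k\approx n/2$ this falls below the target whenever $\tfrac12<\lambda\lesssim 0.74$; so your deferred ``one-dimensional'' minimization must genuinely exploit the trade-off between making $\mathrm{Re}\braket{D_k|S_x|U}$ large and keeping $\braket{U|H_\text{TF}|U}$ small, a computation you have not carried out. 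The paper sidesteps all of this with a per-qubit argument that uses only the fact that the weight-$k$ layer is an independent set of the hypercube: for each $j$, decompose $\ket{\psi_p}=\sqrt{\lambda}\ket{\phi_0}+\alpha\ket{\phi_1}+\beta\ket{\phi_2}$ along the ranges of $P_S$, of $X_jP_SX_j$, and of their orthogonal complement. Since $P_SX_jP_S=0$, the term $\braket{\phi_0|X_j|\phi_0}$ and all cross terms involving $\ket{\phi_2}$ vanish, so $\braket{\psi_p|X_j|\psi_p}\le 2\sqrt{\lambda}\,|\alpha|+|\beta|^2=1-\bigl(\sqrt{\lambda}-|\alpha|\bigr)^2\le 2\sqrt{\lambda(1-\lambda)}$, using $|\alpha|^2+|\beta|^2=1-\lambda$ and $\lambda>\tfrac12$. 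Summing over $j$ gives exactly the numerator $n\bigl(1-2\sqrt{\lambda(1-\lambda)}\bigr)$, after which your two correct ingredients complete the proof. That per-qubit decomposition is the missing idea; without it (or an explicit completion of your constrained minimization), the proposal does not establish the corollary.
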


We can see that the above lower bound depends on the choice of $k$.
For example, when $k = n / \log^2 n$, we get approximately $p \geq \log n$.

\subsubsection{Lower bound for search with the transverse-field mixer is structure dependent}
\label{sec:search-lower-bound:tf-mixer:tightness}

We have shown two examples of the lower bounds on $p$ when solving the search problem with the transverse-field mixer.
For the first problem with $S_\text{dist-3}$, we can choose the number of marked states from a single marked state up to around $2^n / n^2$ without affecting the lower bound ($p \geq \sqrt{n}$) since the lower bound holds for any marked states forming a distance-3 independent set on the hypercube graph. 
The number of marked states in the second problem with $S_\text{Hamming-k}$ is more restricted, but we can still vary $k$ and the number of marked states for chosen $k$ is $\binom{n}{k}$.

The two problems can be adjusted (e.g., choosing $k = n / \log^2 n$) to have the same number of marked states while the lower bounds evaluate to something of a different order ($\sqrt{n}$ and $\log n$).
Although the lower bounds seem to capture the problem structure without taking the number of marked states into account, this could merely be an artifact from Theorem~\ref{theorem:annealing-lower-bound} since we already know that the unstructured search lower bound is of order $\Omega(\sqrt{N/m})$ independent of the quantum search framework.
(We use the term \emph{structure} here loosely, referring to some patterns or some additional information regarding the input labeling and not just the knowledge of the distribution of the objective values since it is already well known that for quantum algorithms to gain significant speedup over specialized classical algorithms, it is required that the quantum algorithm must exploit certain structures of the problem~\cite{aaronson-ambainis-need-for-structure}.)
Thus, these two lower bounds are loose when applied to the transverse-field-mixer case.

\subsection{Lower bound on change in states overlap}
\label{sec:search-overlap}

To derive a better bound specific for search, we show that with a small adjustment to the lower bound given in Theorem~\ref{theorem:annealing-lower-bound}, we can derive another lower bound based on changes in the overlapping of states.
\begin{lemma}
Given two driving Hamiltonians with zero ground-state energies $H_0$ and $H_1$, where $H_1$ is $2\pi$ periodic, and letting $P_0$ be a projector that commutes with $H_0$,
for a QAOA protocol with $p$ rounds that starts in the ground state of $H_0$, it holds that
\begin{align}
    p \geq \frac{ \left| \braket{P_0}_{p} - \braket{P_0}_{0} \right| }{2\pi \left\| [P_0,H_1] \right\|}.
\end{align}
\label{lemma:probability-based-bound-annealing}
\end{lemma}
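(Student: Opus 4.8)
The plan is to mimic the technique behind Theorem~\ref{theorem:annealing-lower-bound}, but to track the expectation value of the projector $P_0$ rather than the energies, exploiting the fact that $P_0$ is frozen whenever the mixer acts. Working in the bang-bang (QAOA) picture of Sec.~\ref{sec:prelim:qaoa-qa-connection}, I would split the evolution of Eq.~\eqref{eq:qaoa-circuit} into its alternating $H_0$ and $H_1$ blocks and argue that only the $H_1$ blocks can move $\braket{P_0}$, so the accumulated change is controlled by the total phase-separation ``time'' $\sum_j|\gamma_j|$ alone.

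First I would observe that each mixing step leaves $\braket{P_0}$ invariant: since $[P_0,H_0]=0$, we have $e^{i\beta_j H_0}P_0\,e^{-i\beta_j H_0}=P_0$, so the $\beta_j$ rotations contribute nothing to the change in $\braket{P_0}$. This is precisely why the hypothesis constrains $P_0$ only through its commutation with $H_0$, and why---unlike Theorem~\ref{theorem:qaoa-objective-bound}---the denominator will carry a single factor of $2\pi$ rather than $4\pi$, with no periodicity assumption needed on $H_0$.

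Next, for a single phase-separation step I would bound how fast $\braket{P_0}$ can change. Setting $f(s)=\braket{\phi|e^{isH_1}P_0\,e^{-isH_1}|\phi}$ for the state $\ket{\phi}$ entering that block, differentiation gives $f'(s)=i\braket{\phi|e^{isH_1}[H_1,P_0]e^{-isH_1}|\phi}$, hence $|f'(s)|\le\|[H_1,P_0]\|$ and the block changes $\braket{P_0}$ by at most $|\gamma_j|\,\|[H_1,P_0]\|$. Summing the per-round changes with the triangle inequality across all $p$ rounds yields $\big|\braket{P_0}_{p}-\braket{P_0}_{0}\big|\le\big(\sum_{j=1}^p|\gamma_j|\big)\|[H_1,P_0]\|$.

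Finally I would invoke the $2\pi$ periodicity of $H_1$ (its integer spectrum makes $e^{-i\gamma_j H_1}$ periodic in $\gamma_j$) to replace each $\gamma_j$ by its representative in $[0,2\pi)$ without altering the circuit, giving $\sum_j|\gamma_j|\le 2\pi p$ and therefore $\big|\braket{P_0}_{p}-\braket{P_0}_{0}\big|\le 2\pi p\,\|[P_0,H_1]\|$, which rearranges into the claimed bound using $\|[H_1,P_0]\|=\|[P_0,H_1]\|$. I expect the only delicate point to be the bookkeeping that isolates the $H_1$ contributions from the $H_0$ ones and the periodicity reduction of the $\gamma_j$; note that the projector property of $P_0$ plays no role in the estimate beyond making $\braket{P_0}$ an overlap/probability, so the same argument applies verbatim to any $H_0$-commuting observable.
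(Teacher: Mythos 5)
Your proof is correct and takes essentially the same route as the paper's: the paper writes the QAOA as a bang-bang schedule $H(t)=[1-g(t)]H_0+g(t)H_1$ and bounds $\left|\frac{d}{dt}\braket{P_0}_t\right|\leq |g(t)|\,\|[P_0,H_1]\|$ before integrating, which is precisely your block-by-block estimate (mixer blocks frozen because $[P_0,H_0]=0$, phase-separator blocks bounded by the commutator norm) combined with the same periodicity cap $|\gamma_j|\leq 2\pi$. The only difference is presentational --- continuous integration over $g(t)$ versus your discrete per-round triangle inequality --- so the two arguments are mathematically identical.
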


\begin{proof}
Consider dynamics with a Hamiltonian 
\begin{align}
    H(t) = (1-g(t))H_0 + g(t) H_1,
\end{align}
where the value of $g(t) = \{0, 1\}$ in the case of the QAOA. 
If $P_0$ is a projector that commutes with $H_0$, we have that 
\begin{align}
\left| \frac{d}{dt}\braket{P_0}_t \right| 
    &= \left| -i \tr{[H(t),\rho_t]P_0} \right| \\
    &= \left| \tr{\rho_t [P_0, g(t) H_1]} \right|\\
    &\leq |g(t)| \| [P_0,H_1] \| \label{eq:last-line-ref},
\end{align}
where \eqref{eq:last-line-ref} resulted from the cyclic property of trace and the Rayleigh quotient as done in eq.~\eqref{eq:rayleigh-quotient-technique}.
Integrating over the duration $t_f$ of the protocol gives
\begin{align}
\left| \braket{P_0}_{t_f} - \braket{P_0}_{0} \right| 
    &=  \left| \int_{0}^{t_f} \frac{d}{dt} \braket{P_0}_{t} dt \right| \\ 
    &\leq \| [P_0,H_1] \| \int_{0}^{t_f} |g(t)| dt \label{eq:probability-based-bounds-integration-line}\\ 
    &\leq \| [P_0,H_1] \| \sum_{j=1}^{p} |\gamma_j|\\ 
    &\leq \| [P_0,H_1] \| 2\pi p,
\end{align}
where we assumed $|\gamma_j| < 2\pi$ due to the periodicity.
Then 
\begin{align}
    p \geq \frac{ \left| \braket{P_0}_{p} - \braket{P_0}_{0} \right| }{2 \pi \left\| [P_0,H_1] \right\|},
\end{align}
completing the proof.
\end{proof}

As we can notice from this lower bound derived from the change in state overlap, the numerator is always between 0 and 1, and thus this bound would only be useful when the spectral norm of the commutator (the denominator) is small.
We will show that in the case of a search, this bound would prove to be much stronger and more general than our lower bound from Theorem~\ref{theorem:annealing-lower-bound}.

\subsection{Lower bound on search with QAOA}
\label{sec:search-qaoa}

The uniform superposition state $\ket{\psi_0} = \ket{+}^{\otimes n}$, the usual initial state for QA and the QAOA, is the ground state of the transverse-field Hamiltonian $H_\text{TF}$.
By choosing $P_0 = \ket{\psi_0}\bra{\psi_0}$ for Lemma~\ref{lemma:probability-based-bound-annealing}, we can form a theorem for the search.

\begin{theorem}
If a QAOA protocol solves a search problem with $p$ rounds, finding a marked state from $m$ marked states, with success probability $\lambda$ and is driven by the objective value phase separator $H_1$ (phase oracle with adjustable phase) and a mixing Hamiltonian $H_0$ that starts in the ground state $\ket{\psi_0} = \ket{+}^{\otimes n}$ of $H_0$, then
\begin{align}
p   &\geq \frac{\lambda(N-2m) + m - 2\sqrt{\lambda(1-\lambda)m(N-m)}}{2 \pi \sqrt{m(N-m)}}.
\end{align}
\label{theorem:search-bound-uniform-superposition-initial-state}
\end{theorem}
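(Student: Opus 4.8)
The plan is to invoke Lemma~\ref{lemma:probability-based-bound-annealing} with the rank-one projector $P_0 = \ket{\psi_0}\bra{\psi_0}$ onto the uniform superposition. Since $\ket{\psi_0}$ is the zero-energy ground state of the mixer, $H_0\ket{\psi_0}=0$, so $H_0 P_0 = P_0 H_0 = 0$ and hence $[P_0,H_0]=0$; this verifies the hypothesis of the lemma for \emph{any} admissible mixer $H_0$. The crucial structural observation is that the denominator $\|[P_0,H_1]\|$ depends only on $P_0$ and the phase separator $H_1$, and not on $H_0$ at all, which is precisely what makes the resulting bound mixer-independent.

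For the numerator, I would first note that the protocol starts in $\ket{\psi_0}$, so $\braket{P_0}_0 = 1$, while $\braket{P_0}_p = |\braket{\psi_0|\psi_p}|^2$ is the final overlap with the initial state. To obtain a bound valid for every protocol achieving success probability $\lambda$, I would substitute the \emph{largest} overlap consistent with $\lambda$, namely the value in eq.~\eqref{eq:final-state-overlap-search}, since this yields the smallest change $|1-\braket{P_0}_p|$ and hence the most conservative (universally valid) lower bound on $p$. Expanding the square in eq.~\eqref{eq:final-state-overlap-search} and subtracting from unity gives, after collecting terms,
\begin{align}
1 - \left|\braket{\psi_0|\psi_p}\right|^2 = \tfrac{1}{N}\left[\lambda(N-2m) + m - 2\sqrt{\lambda(1-\lambda)m(N-m)}\right],
\end{align}
which is manifestly nonnegative because the left-hand side is.

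For the denominator I would compute $\|[P_0,H_1]\|$. In the search setting $H_1 = \mathbbm{1} - H_C$ with $H_C$ the projector onto the marked subspace [eq.~\eqref{eq:h1} with $C_{\text{max}}=1$], so $[P_0,H_1] = -[P_0,H_C] = [H_C,P_0]$. Recalling that the Grover mixer is $H_\text{Grover} = \mathbbm{1} - \ket{\psi_0}\bra{\psi_0} = \mathbbm{1} - P_0$, we have $[H_C,P_0] = -[H_C,H_\text{Grover}]$, hence $\|[P_0,H_1]\| = \|[H_C,H_\text{Grover}]\| = \sigma_C$ directly from eq.~\eqref{eq:spectral-norm-commutator-grover-mixer}. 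Combined with the search standard deviation $\sigma_C = \sqrt{m(N-m)}/N$ [eq.~\eqref{eq:standard-deviation-search}], the denominator is fixed. Alternatively one can derive this from scratch: $[H_C,P_0]$ is supported on the two-dimensional span of $\ket{\psi_0}$ and the normalized marked-state superposition, where it reduces to a $2\times 2$ skew-Hermitian block whose norm evaluates to $\sqrt{m(N-m)}/N$.

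Assembling the pieces, the common factor $1/N$ in the numerator and denominator cancels and Lemma~\ref{lemma:probability-based-bound-annealing} yields exactly the claimed bound. The main obstacle I anticipate is the denominator computation — establishing $\|[P_0,H_1]\| = \sigma_C$ — but this is already available from the Grover-mixer analysis; the only remaining subtlety requiring care is the justification that substituting the maximal overlap of eq.~\eqref{eq:final-state-overlap-search} produces a lower bound valid for \emph{all} protocols achieving success probability $\lambda$, rather than merely for the overlap-maximizing one.
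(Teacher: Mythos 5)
Your proposal is correct and follows essentially the same route as the paper's own proof: apply Lemma~\ref{lemma:probability-based-bound-annealing} with $P_0 = \ket{\psi_0}\bra{\psi_0}$, identify $\|[P_0,H_1]\|$ with the Grover-mixer commutator norm $\sigma_C$ from eq.~\eqref{eq:spectral-norm-commutator-grover-mixer}, and substitute the search-problem values from eqs.~\eqref{eq:final-state-overlap-search} and~\eqref{eq:standard-deviation-search}. In fact you are somewhat more careful than the paper on two points it leaves implicit --- verifying $[P_0,H_0]=0$ for an arbitrary mixer with $\ket{\psi_0}$ as its zero-energy ground state, and justifying that the overlap in eq.~\eqref{eq:final-state-overlap-search} is the \emph{maximal} one consistent with success probability $\lambda$, so that substituting it yields a universally valid bound.
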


\begin{proof}
Considering the lower bound expression in Lemma~\ref{lemma:probability-based-bound-annealing}, when choosing $P_0 = \ket{\psi_0}\bra{\psi_0}$ we see that
\begin{align}
    [P_0, H_1] = -[\mathbbm{1} - P_0, H_1] = [H_1, H_\text{Grover}].
\end{align}
Then using Lemma~\ref{lemma:probability-based-bound-annealing}, we get
\begin{align}
p \geq \frac{ \left| \braket{P_0}_{p} - \braket{P_0}_{0} \right| }{2 \pi \left\| [P_0, H_1] \right\|} = \frac{1 - \left|\braket{\psi_0|\psi_p}\right|^2}{2\pi \sigma_C}. \label{eq:73}
\end{align}
By plugging the values from eqs.~\eqref{eq:final-state-overlap-search} and~\eqref{eq:standard-deviation-search} into the inequality \eqref{eq:73}, we get the expression as claimed, thus completing the proof.
\end{proof}

We note that the above lower bound can be adapted to other initial states, as long as the initial state is the ground state of the Hamiltonian $H_0$.
This includes all variants of amplitude amplifications with any diffusion operator (mixing Hamiltonian) that starts in the ground state of this operator~\cite{grover-alg, brassard-bhmt-amplitude-amplification-estimation} and with a slight modification to eq.~\eqref{eq:probability-based-bounds-integration-line} also applies to a continuous-time quantum-walk search~\cite{childs-quantum-walk-search, apers-quadratic-speedup-ctqw} that starts in the stationary distribution.

\section{Discussion}
\label{sec:discussion}

Although the QAOA is a very active topic of research in quantum algorithms and variational quantum algorithms, most results are of empirical nature, relying on numerical results of small problem instances and extrapolation to large instances while analytical results are few and far between.
It has been observed empirically that the choice of mixers can have a significant impact on the performance of the QAOA, in terms of both the number of rounds and the difficulty in the angle-finding classical loop.
In this work, we derived lower bounds in the number of QAOA rounds given a target constant approximation ratio when the driving Hamiltonians are periodic as Theorem~\ref{theorem:qaoa-objective-bound}.
We showed that the lower bounds derived also heavily depend on the spectral norm of the commutator of the phase separator and the mixer, strengthening the stance that QAOA research should invest more effort in understanding the relationship between the problem's structure and the mixer structure beyond the low-depth regime.
Despite the fact that our lower bounds can only be saturated by Grover's search algorithm, we consider it a first step toward the theoretical understanding of the QAOA protocols.

We also showed the limitations of our bounds when the problem Hamiltonian is bounded $k$-local or strictly $k$-local due to the spectral norm of the commutator of the problem and the transverse-field-mixing Hamiltonians are larger than the numerator given in our bounds resulting in a trivial fact that one needs at least one round.
This is because the bounds in~\cite{luis-pedro-annealing-lower-bounds} were derived from the changes of the energies with respect to the driving Hamiltonians.
Since the transverse field can flip the whole bit string, the energy can undergo a rapid change, thus rendering our lower bounds not so useful in these cases.
Possible ways to circumvent this problem are to characterize effective changes in the energy per one round of QAOA (under the small-angle regime~\cite{hadfield-analytical-framework-qaoa}),  
include more assumptions of angles at each round and utilize a tighter bound of~\cite{luis-pedro-annealing-lower-bounds}, or use other measures in addition to energies.

In the search problems with the transverse-field mixer, we saw that the lower bound captures certain structures while disregarding the number of marked states.
Although this could merely be an artifact of the lower bound derived from changes in energies, how the structure of the problem affects the bound is still an open problem.
We also showed that it is possible to derive problem-specific bounds like in the search case and certain bounds work better for certain problems.

\begin{acknowledgements}
We thank Yi\u{g}it Suba\c{s}i for helpful discussions.
Research presented in this article was supported by the Laboratory Directed Research and Development program of Los Alamos National Laboratory under Project No. 20230049DR.
LPGP acknowledges support from the DOE Office of Science, Office of Advanced Scientific Computing Research, Accelerated Research for Quantum Computing program, Fundamental Algorithmic Research for Quantum Computing (FAR-QC) project, and Beyond Moore’s Law project of the Advanced Simulation and Computing Program at LANL managed by Triad National Security, LLC, for the National Nuclear Security Administration of the U.S. DOE under Contract No. 89233218CNA000001.
\\ This work has been assigned the LANL technical report number LA-UR-23-29376.
\end{acknowledgements}

\bibliography{bib-zotero}

\onecolumngrid
\appendix

\section{Derivations of Grover-mixer and objective value phase separator lower bound}
\label{appendix:grover-obj-bound-derivations}

In the proof of Theorem~\ref{theorem:grover-objective-bound}, we gave an outline of how to calculate the spectral norm of the commutator term, for which we gave the result $\|[H_C, H_\text{Grover}]\| = \sigma_C$, where $\sigma_C$ is the standard deviation of the objective values.
We will show the derivations here.
Letting $\ket{\psi_0} = \ket{+}^{\otimes n}$, we get
\begin{align}
[H_C, H_\text{Grover}] 
    &= H_C H_\text{Grover} - H_\text{Grover} H_C \\
    &= -\left( 
     H_{C} \ket{\psi_0}\bra{\psi_0} - \ket{\psi_0}\bra{\psi_0} H_{C}
    \right) \\
    &= -\left( 
     (H_{C} \ket{\psi_0})\bra{\psi_0} - 
     \ket{\psi_0}(H_{C}^\dagger \ket{\psi_0})^\dagger
    \right) \\
    &= -\left( 
     \left(\frac{1}{\sqrt{N}} \sum_i H_{C} \ket{i}\right)\bra{\psi_0} - 
     \ket{\psi_0}\left(\frac{1}{\sqrt{N}} \sum_i H_{C} \ket{i}\right)^\dagger
    \right) \\
    &= -\frac{1}{\sqrt{N}} \sum_i C(i) \left( \ket{i}\bra{\psi_0} - \ket{\psi_0}\bra{i}  \right) \\
    &= -\frac{1}{N} \sum_i C(i) \left( \sqrt{N}\ket{i}\bra{\psi_0} - \sqrt{N}\ket{\psi_0}\bra{i}  \right) \\
    &= -\frac{1}{N} A,
\end{align}
where
\begin{align}
    A &= \left( \ket{C}\bra{e} - \ket{e}\bra{C} \right), \\
    \ket{C} &= \begin{bmatrix} C(0) & C(1) & C(2) & \cdots & C(N-1) \end{bmatrix}^T, \\
    \ket{e} &= \begin{bmatrix} 1 & 1 & 1 & \cdots & 1 \end{bmatrix}^T = \sqrt{N} \ket{\psi_0}. \\
\end{align}
Each entry of matrix $A$ can be described by
\begin{equation}
    A_{ij} = C(i) - C(j).
\end{equation}
We can see that $A$ is a skew-symmetric matrix, meaning that its eigenvalues are purely imaginary and come in pairs, and its spectral norm is equal to the absolute value of its largest eigenvalue.
Because $\ket{C}$ and $\ket{e}$ are not collinear, it means that the rank of $A$ is at most 2 and so the magnitudes of the two nonzero eigenvalues (if exist) will be equal.

There is a closed form for computing the spectral norm of a skew-symmetric matrix of this form~\cite{ipek-eigenvalue-rank-1-skew-symmetric} given by
\begin{equation}
    \|A\| = \sqrt{\beta N - \alpha^2},
\label{eq:spectral-norm-skew-symmetric}
\end{equation}
where 
\begin{align}
    \alpha =  \sum_i C(i) = N C_{\text{avg}} \text{ and }
    \beta = \sum_i C(i)^2.
\label{eq:spectral-norm-alpha-beta}
\end{align}
Using this formula, we get
\begin{align}
  \|[H_1, H_0]\|
    &= \frac{1}{N} \| A \| \\
    &= \frac{1}{N} \sqrt{N \sum_i C(i)^2 - \left(N C_{\text{avg}} \right) ^2}\\
    &= \sqrt{ \frac{\sum_i C(i)^2 - N {C_{\text{avg}}}^2}{N}} \\
    &= \sigma_{C},
\end{align}
where $\sigma_{C}$ is the standard deviation of the objective values.

\section{Statistical value of Max-Cut}
\label{appendix:max-cut-statistical-value}

We will show here the derivations regarding statistical values, the mean, and the standard deviation of cut values of Max-Cut problems.

\begin{lemma}
The average of cut values of a graph is $|E|/2.$
\end{lemma}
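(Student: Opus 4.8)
The plan is to compute the average directly via linearity, decomposing the cut objective into a sum of per-edge contributions. For a bit string $x \in \{0,1\}^n$ assigning each vertex to one of two partitions, the Max-Cut objective counts edges whose endpoints lie in different parts, so I would write
\begin{align}
C(x) = \sum_{(u,v) \in E} \mathbbm{1}[x_u \neq x_v],
\end{align}
where $\mathbbm{1}[\cdot]$ is the indicator that the edge $(u,v)$ is cut. The whole computation then reduces to evaluating the uniform average of a single indicator and summing.

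The key steps, in order, are as follows. First I would substitute this decomposition into $C_{\text{avg}} = \frac{1}{N}\sum_x C(x)$ with $N = 2^n$ and exchange the two finite sums, which is justified purely by linearity (no probabilistic independence is required here). Second, for a fixed edge $(u,v)$ I would enumerate the four equally likely joint assignments of $(x_u, x_v)$, namely $00, 01, 10, 11$, and observe that exactly two of them ($01$ and $10$) contribute a cut; hence the average of $\mathbbm{1}[x_u \neq x_v]$ over all $x$ equals $\tfrac{1}{2}$, independent of $n$ and of the remaining $n-2$ coordinates. Third, summing this constant contribution over all $|E|$ edges gives
\begin{align}
C_{\text{avg}} = \sum_{(u,v)\in E} \frac{1}{2} = \frac{|E|}{2},
\end{align}
completing the argument.

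I do not expect a genuine obstacle: this is a one-line linearity-of-expectation computation, and the only point demanding care is conceptual rather than technical. Specifically, I would emphasize that the exchange of sums and the summation of per-edge averages rely solely on linearity, so no independence between edges is invoked at this stage. This matters because the companion computation of $\sigma_C = \sqrt{|E|}/2$ referenced in the corollary proof \emph{does} require the pairwise-independence structure of the edge indicators, and keeping the two arguments cleanly separated avoids conflating the mean calculation (linearity only) with the variance calculation (where the covariance structure enters).
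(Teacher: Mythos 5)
Your proposal is correct and takes essentially the same route as the paper's proof: both decompose the cut value edge by edge, note that two of the four possible endpoint assignments cut a given edge so its average contribution is $\tfrac{1}{2}$, and sum over the $|E|$ edges. Your explicit observation that only linearity of the average is needed here is in fact a slight sharpening of the paper's wording, which invokes independence of edge pairs at this step even though independence only becomes relevant in the companion variance computation.
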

\begin{proof}
Since there are four possible ways to assign a cut partition to an edge, it is easy to see that the expected cut value for each edge is $\frac{1}{2}$, and since edges of a pair are independent of each other, the expected cut value then equals $|E|/2$.
\end{proof}

\begin{lemma}
The standard deviation of cut values of a graph is $\sqrt{|E|}/2.$
\end{lemma}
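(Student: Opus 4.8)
The plan is to compute the variance of the cut value directly as an average over the uniform distribution of all $N = 2^n$ bit strings and take the square root. First I would write the cut value as a sum of per-edge indicators: using the $\pm 1$ encoding $z_i = (-1)^{x_i}$, for each edge $e = (i,j) \in E$ set $X_e = \tfrac{1}{2}(1 - z_i z_j)$, so that $X_e = 1$ exactly when the edge is cut and the cut value is $C = \sum_{e \in E} X_e$. Writing $\mathbb{E}[\,\cdot\,]$ for the uniform average over bit strings, each $z_i$ is an independent fair $\pm 1$ sign, so $\mathbb{E}[X_e] = \tfrac{1}{2}$ (recovering $C_{\text{avg}} = |E|/2$) and, since $X_e$ is $0/1$-valued, $\mathbb{E}[X_e^2] = \mathbb{E}[X_e] = \tfrac{1}{2}$, giving $\operatorname{Var}(X_e) = \tfrac{1}{2} - \tfrac{1}{4} = \tfrac{1}{4}$.

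Next I would expand $\sigma_C^2 = \sum_{e \in E} \operatorname{Var}(X_e) + \sum_{e \neq f} \cov(X_e, X_f)$ and argue that every covariance term vanishes. For two disjoint edges this is immediate, as they involve disjoint sets of independent signs. The subtle case --- and the step I expect to be the main obstacle --- is a pair of edges sharing a vertex, say $e = (i,j)$ and $f = (i,k)$, where one might naively expect correlation through the shared variable $z_i$. I would expand $X_e X_f = \tfrac{1}{4}(1 - z_i z_j)(1 - z_i z_k) = \tfrac{1}{4}(1 - z_i z_j - z_i z_k + z_j z_k)$, using $z_i^2 = 1$; since each of $z_i z_j$, $z_i z_k$, and $z_j z_k$ is an unbiased $\pm 1$ product, all three have zero mean, so $\mathbb{E}[X_e X_f] = \tfrac{1}{4} = \mathbb{E}[X_e]\,\mathbb{E}[X_f]$ and the covariance is zero. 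Hence the edge indicators are pairwise uncorrelated whether or not they share a vertex.

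With all cross terms gone, $\sigma_C^2 = \sum_{e \in E} \tfrac{1}{4} = |E|/4$, and therefore $\sigma_C = \sqrt{|E|}/2$, as claimed. As an independent check that sidesteps the covariance bookkeeping, the same value follows from the $k$-local identity $\sigma_C^2 = \sum_\nu \alpha_\nu^2$ established in the proof of Theorem~\ref{theorem:grover-local-cost}: writing the Max-Cut cost Hamiltonian as $H_C = \tfrac{|E|}{2}\mathbbm{1} - \tfrac{1}{2}\sum_{(i,j) \in E} Z_i Z_j$, every one of the $|E|$ two-local terms carries coefficient $\alpha_\nu = \tfrac{1}{2}$, so $\sigma_C^2 = |E|\cdot\tfrac{1}{4} = |E|/4$ at once.
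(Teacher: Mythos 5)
Your proof is correct and takes essentially the same route as the paper: the paper's primary proof also expands the second moment over pairs of edges and shows that pairs sharing a vertex contribute no correlation (it does this by counting the $2\cdot 2^{n-3}$ assignments cutting both edges, where you expand the $\pm 1$ monomials and use $z_i^2 = 1$, but the content is identical). Your ``independent check'' via $\sigma_C^2 = \sum_\nu \alpha_\nu^2$ is verbatim the paper's own alternate proof given immediately after the lemma.
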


\begin{proof}
We define $c(U, V) = \{ e~|~e \in E; e_u \in U, e_v \in V\}$ \text{ where } $e = \{e_u, e_v\}$.
Letting $X$ be a random variable corresponding to the cut values,
we have

\begin{align}
2^n E[X^2] 
  &= \sum_{W \subseteq V} \left( \big| c(W, V \setminus W) \big| \right)^2 \\
  &= \sum_{W \subseteq V} \sum_{e \in c(W, V \setminus W)} \sum_{f \in c(W, V \setminus W)} \left( 1_e \cdot 1_f \right) \\
  &= \sum_{W \subseteq V} \left( \sum_{\substack{e,f \in c(W, V \setminus W), \\ \big| e \cap f \big| = 0}} (1_e \cdot 1_f) + \sum_{\substack{e,f \in c(W, V \setminus W), \\ \big| e \cap f \big| = 1}} (1_e \cdot 1_f) + \sum_{\substack{e,f \in c(W, V \setminus W), \\ \big| e \cap f \big| = 2}} (1_e \cdot 1_f) \right) \\
  &= \sum_{\substack{e,f \in E, \\ \big| e \cap f \big| = 0}} \sum_{\substack{W \subseteq V,\\ e_u \neq e_v, \\ f_u \neq f_v}} 1 +
        \sum_{\substack{e,f \in E, \\ \big| e \cap f \big| = 1}} \sum_{\substack{W \subseteq V,\\ e_u \neq e_v, \\ f_u \neq f_v}} 1 +
        \sum_{\substack{e,f \in E, \\ \big| e \cap f \big| = 2}} \sum_{\substack{W \subseteq V,\\ e_u \neq e_v, \\ f_u \neq f_v}} 1 \\
  &= \sum_{\substack{e,f \in E, \\ \big| e \cap f \big| = 0}} 4 \cdot 2^{n-4} +
        \sum_{\substack{e,f \in E, \\ \big| e \cap f \big| = 1}} 2 \cdot 2^{n-3} +
        \sum_{\substack{e,f \in E, \\ \big| e \cap f \big| = 2}} 2 \cdot 2^{n-2} \\
  &= 2^{n-2} \left( \sum_{\substack{e,f \in E, \\ \big| e \cap f \big| = 0}} 1 +
        \sum_{\substack{e,f \in E, \\ \big| e \cap f \big| = 1}} 1 +
        \sum_{\substack{e,f \in E, \\ \big| e \cap f \big| = 2}} 1 \right) + \sum_{\substack{e,f \in E, \\ \big| e \cap f \big| = 2}} 1 \\
  &= 2^{n-2} \left(\big| E \big|^2 + \big| E \big|\right).
\end{align}
Thus we get 
\begin{align}
\text{Var}(X) 
  &= E[X^2] - E[X]^2 \\
  &= \frac{\big| E \big|^2 + \big| E \big|}{4} - \left(\frac{\big|E\big|}{2}\right)^2 \\
  &= \frac{\big| E \big|}{4}, \\
\sigma(X) 
  &= \frac{\sqrt{\big| E \big|}}{2}.
\end{align}
This completes the proof.
\end{proof}

There is an alternate proof, which is far simpler, by writing the objective function of Max-Cut in the $k$-local Hamiltonian formalism.
We can write the Max-Cut Hamiltonian of an unweighted undirected graph as 
\begin{align}
    H_C = \frac{1}{2} \sum_{(u, v) \in E} \left(\mathbbm{1} - Z_u Z_v \right) = \frac{|E|}{2}\mathbbm{1} - \sum_{(u, v) \in E} \frac{1}{2} Z_u Z_v.
\end{align}
Using the technique outline in Sec.~\ref{sec:grover-mixer:k-local-hamiltonian}, we get
\begin{align}
\text{Var}(X)
  &= \sum_{e \in E} \left(\frac{1}{2} \right)^2 = \frac{|E|}{4},
\end{align}
which is the same as the calculations above.

\section{Derivations of lower bounds on QAOA rounds for search with transverse field}
\label{appendix:tf-search-derivations}

From Lemma~\ref{lemma:tf-objective-bound}, we have
\begin{align}
p \geq \frac
    {\frac{n}{2} - \frac{1}{2}\sum_{j=1}^{n} \braket{\psi_p| X_j |\psi_p} + \lambda C_{\text{max}} - C_{\text{avg}}}
    {4\pi \|[H_C, H_\text{TF}]\|}.
\end{align}
In the case of a search, we have
\begin{align}
    C_{\text{max}} = 1, \quad C_{\text{avg}} = \frac{|S|}{N},
\end{align}
where $S$ is the set of marked states and $N = 2^n$.

Let $H_S$ denote the Hamiltonian for a search problem given the set of marked states by $S$, $C(x) = \{0, 1\}$ denote the membership of $x$ in $S$, and $y \sim x$ denote that $y$ is Hamming distance 1 away from $x$.
Since we are considering searching with the transverse-field mixer, the commutator term can be written as
\begin{align}
[H_{S}, H_\text{TF}]
    &= H_S H_\text{TF} - H_\text{TF} H_S \\
    &= - H_S\left( \frac{1}{2}\sum_x \sum_{y \sim x}\ket{y}\bra{x}\right) + \left( \frac{1}{2}\sum_x \sum_{y \sim x}\ket{y}\bra{x}\right) H_S \\
    &= - \left( \frac{1}{2}\sum_x \sum_{y \sim x} C(y)\ket{y}\bra{x}\right) + \left( \frac{1}{2}\sum_x \sum_{y \sim x} C(x)\ket{y}\bra{x}\right) \\
    &= \frac{1}{2}\sum_x \sum_{y \sim x} (C(x) - C(y))\ket{y}\bra{x}.
\end{align}
Ignoring the $\frac{1}{2}$ prefactor, we can see that the commutator term can be viewed as a directed weighted subgraph of an $n$-dimensional hypercube with weight $\pm 1$.
If both vertices are marked or unmarked, there is no edge between them.
On the other hand, if they are in different sets, there are two edges between them; one edge from $x$ to $y$ and another with the opposite sign from $y$ to $x$.

\subsection{Search set from distance-3 independent set of hypercube}

We can now show the derivations for Corollary~\ref{corollary:tf-search-3-set-bound}.
Since no marked vertices are adjacent to each other nor are they sharing neighbors, the graph representing this commutator $[H_{S_\text{dist-3}}, H_\text{TF}]$ consists of $|S_\text{dist-3}|$ star graphs $K_{1,n}$.
Since the commutator is traceless and normal, the spectral norm equals the largest eigenvalue in magnitude, and since it also represents a graph, its spectral norm is the spectral radius of the corresponding graph.
The sign in the matrix does not matter in this particular case since all edges pointing from (or to) the center of the star share the same sign.
We then get (using Lemma~\ref{lemma:spectral-radius-of-star-graph})
\begin{align}
    \|[H_{S_\text{dist-3}}, H_\text{TF}]\| = \sqrt{n} / 2.
\end{align}

Next, we will show how to upper bound the numerator term.
First, we define a projector onto the subspace spanned by the marked states
\begin{align}
    P_S = \sum_{z \in S} \ket{z}\bra{z}.
\end{align}
We then rewrite the state $\ket{\psi_p}$ after $p$ rounds of the QAOA as
\begin{align}
    \ket{\psi_p} &= \sqrt{\lambda} \ket{\phi_0} + \alpha \ket{\phi_1} + \beta \ket{\phi_2}, \\
    \ket{\phi_0} &= \frac{P_S \ket{\psi_p}}{\|P_S \ket{\psi_p}\|}, \\
    \ket{\phi_1} &= \frac{\left( X_j P_S X_j \right) \ket{\psi_p}}{\|\left( X_j P_S X_j \right) \ket{\psi_p}\|}, \\
    \ket{\phi_2} &= \frac{\left(I - P_S - X_j P_S X_j \right)\ket{\psi_p}}{\|\left(I - P_S - X_j P_S X_j \right) \ket{\psi_p}\|}, 
\end{align}
where $\ket{\phi_0}$ denotes the projected state into the subspace spanned by the marked states, $\ket{\phi_1}$ denotes the projected state into the subspace spanned by the marked state having the $j\text{th}$ bit flipped, and $\ket{\phi_2}$ denotes projected state into the subspace orthogonal to the prior two subspaces.
We then have
\begin{align}
\braket{\psi_p | X_j | \psi_p} 
    &= 2 \text{Re}(\sqrt{\lambda} \alpha \braket{\phi_0|X_j|\phi_1}) + |\beta|^2 \braket{\phi_2|X_j|\phi_2} \\
    &\leq 2 \sqrt{\lambda} |\alpha| |\braket{\phi_0|X_j|\phi_1}| + |\beta|^2 \braket{\phi_2|X_j|\phi_2} \\
    &\leq 2 \sqrt{\lambda} |\alpha| + |\beta|^2  \\
    &\leq 2 \sqrt{\lambda} |\alpha| + 1 - \lambda - |\alpha|^2  \\
    &\leq 1 - (\sqrt{\lambda} - |\alpha|)^2  \\
    &\text{ (assuming that $\lambda > 1/2$, we get) } \\
    &\leq 1 - (\sqrt{\lambda} - \sqrt{1 - \lambda})^2  \\
    &\leq 2 \sqrt{\lambda(1-\lambda)}. 
\label{eq:tf-search-expectation-h0p}
\end{align}
This inequality holds true for all $j$; thus we have 
\begin{align}
    n - \sum_{j=1}^{n} \braket{\psi_p| X_j |\psi_p} \geq n \left(1 - 2\sqrt{\lambda(1-\lambda)}\right).
\end{align}

Substituting everything into Lemma~\ref{lemma:tf-objective-bound}, we get
\begin{align}
p &\geq \frac
    {\frac{n}{2} - \frac{1}{2}\sum_{j=1}^{n} \braket{\psi_p| X_j |\psi_p} + \lambda C_{\text{max}} - C_{\text{avg}}}
    {4\pi \|[H_C, H_\text{TF}]\|} \\
  &\geq \frac
    {\frac{1}{2} n \left(1 - 2\sqrt{\lambda(1-\lambda)}\right) + \lambda - m/N}
    {2\pi \sqrt{n}} \\
  &= \frac
    {n \left(1 - 2\sqrt{\lambda(1-\lambda)}\right) + 2\lambda - 2m/N}
    {4\pi \sqrt{n}}.
\end{align}

\subsection{Search set from bit strings with a fixed Hamming weight}

Now we will derive the proof for Corollary~\ref{corollary:tf-search-hamming-layer-bound}.
Consider the search set $S_\text{Hamming-k}$ where the marked states are selected from bit strings with Hamming weight equal to $k$.
The numerator term can be done similarly to that of marked states constructed from the distance-3 independent set of a hypercube.
For the commutator term $[H_{S_\text{Hamming-k}}, H_\text{TF}]$, using the same picture, we view it as an induced subgraph of a hypercube. 
Suppose we draw the graph $Q_n$ in levels, where all vertices in level $q$ are represented by a bit string with Hamming weight $q$; we get that there will only be edges connecting level $k-1$ and $k$, and $k$ and $k+1$.
Using the spectral radius property of this induced subgraph, we get (using Lemma~\ref{lemma:spectral-radius-of-3-layer-subgraph-hypercube})
\begin{align}
    \|[H_{S_\text{Hamming-k}}, H_\text{TF}]\| = \frac{\sqrt{2k (n-k) + n}}{2}.
\end{align}

Substituting everything into Lemma~\ref{lemma:tf-objective-bound}, we get
\begin{align}
p &\geq \frac
    {\frac{n}{2} - \frac{1}{2}\sum_{j=1}^{n} \braket{\psi_p| X_j |\psi_p} + \lambda C_{\text{max}} - C_{\text{avg}}}
    {4\pi \|[H_C, H_\text{TF}]\|} \\
  &\geq \frac
    {\frac{1}{2} n \left(1 - 2\sqrt{\lambda(1-\lambda)}\right) + \lambda - m/N}
    {2\pi \sqrt{2k (n-k) + n}} \\
  &= \frac
    {n \left(1 - 2\sqrt{\lambda(1-\lambda)}\right) + 2\lambda - 2m/N}
    {4\pi \sqrt{2k (n-k) + n}}.
\end{align}

\section{Spectral radius calculation of certain graphs}
\label{appendix:hypercube-spectral-radius}

\begin{lemma}
The spectral radius of a star graph $K_{1, n}$ is $\sqrt{n}$.
\label{lemma:spectral-radius-of-star-graph}
\end{lemma}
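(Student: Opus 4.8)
The plan is to compute the spectrum of the adjacency matrix $A$ of $K_{1,n}$ explicitly and read off the largest eigenvalue in magnitude. First I would index the center vertex by $0$ and the $n$ leaves by $1,\dots,n$, so that $A$ is the symmetric $(n{+}1)\times(n{+}1)$ matrix with $A_{0j}=A_{j0}=1$ for each leaf $j$ and all other entries zero. This is the natural presentation since a star graph has a single hub adjacent to every leaf and no edges among the leaves.

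Next I would write the eigenvector equation $Ax=\lambda x$ row by row. Each leaf row gives $x_0=\lambda x_j$, and the center row gives $\sum_{j=1}^n x_j=\lambda x_0$. For a nonzero eigenvalue $\lambda\neq 0$ the leaf equations force $x_j=x_0/\lambda$ for every $j$, so all leaf entries coincide; substituting into the center equation yields $n x_0/\lambda=\lambda x_0$, and since $x_0$ cannot vanish without the whole vector vanishing, I conclude $\lambda^2=n$, i.e.\ $\lambda=\pm\sqrt{n}$. The remaining eigenvalues are $0$, occurring with multiplicity $n-1$ since $A$ has rank $2$. Because $A$ is symmetric, its spectral radius equals $\max_\lambda |\lambda|$, which is $\sqrt{n}$.

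An equivalent and slightly slicker route I would keep in reserve is to write $A=uv^\top+vu^\top$, where $u$ and $v$ are the indicator vectors of the center and of the leaf set; using $\langle u,u\rangle=1$, $\langle v,v\rangle=n$, and $\langle u,v\rangle=0$, the two-dimensional invariant subspace $\mathrm{span}\{u,v\}$ carries the action $u\mapsto v$ and $v\mapsto n\,u$, whose $2\times 2$ matrix has characteristic polynomial $\mu^2-n$ and hence eigenvalues $\pm\sqrt{n}$. There is essentially no real obstacle here beyond bookkeeping; the only points requiring a moment of care are to handle the $\lambda=0$ case separately—it contributes only to the kernel and therefore does not affect the spectral radius—and to invoke symmetry (normality) so that the spectral norm equals the largest eigenvalue in absolute value, exactly as is needed when this lemma is applied to the commutator graph in the appendix above.
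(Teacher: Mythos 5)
Your proof is correct, but it takes a genuinely different route from the paper's. The paper's argument is a one-line walk-counting trick: it observes that $A^2$ has constant row sum $n$ (every length-two walk from any vertex goes through the center or returns to it), so by Perron--Frobenius the spectral radius of $A^2$ is $n$, and since $A$ is symmetric, $\|A\| = \sqrt{\|A^2\|} = \sqrt{n}$. You instead diagonalize $A$ outright: the leaf and center rows of the eigenvector equation force $\lambda^2 = n$ for any nonzero eigenvalue, and the rank-two structure accounts for the kernel, giving the full spectrum $\{\pm\sqrt{n}, 0^{(n-1)}\}$; your reserve argument via $A = uv^\top + vu^\top$ restricted to $\mathrm{span}\{u,v\}$ is the same computation in coordinate-free form. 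Your approach buys more information (the complete spectrum) and is self-contained, needing only symmetry of $A$ rather than Perron--Frobenius. The paper's approach buys brevity and, more importantly, reusability: the identical constant-row-sum/positive-eigenvector technique is what powers the harder companion result (Lemma~\ref{lemma:spectral-radius-of-3-layer-subgraph-hypercube} on the three-layer hypercube subgraph), where explicit diagonalization along your lines would be substantially messier. Both proofs correctly invoke symmetry to identify the spectral norm with the largest absolute eigenvalue, which is the property actually used when the lemma is applied to the commutator graph in Appendix~\ref{appendix:tf-search-derivations}.
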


\begin{proof}
Suppose $A$ is the adjacency matrix of the star graph $K_{1, n}$.
We know that $A^2$ has a constant row sum equal to $n$, the number of ways to walk from the center vertex (any of the leaf vertices) to any of the leaf vertices (the center node) and back to center (any of the leaf vertices) in two steps; thus $||A|| = \sqrt{n}$ (cf. the Perron-Frobenius theorem~\cite{horn-johnson-matrix-analysis}).
\end{proof}

\begin{lemma}[spectral radius of subgraphs of $Q_n$ induced by vertices having Hamming weights $k-1$, $k$, and $k+1$]
For positive integers $k$ and $n$, where $0 < k < n/2$, and an $n$-dimensional hypercube graph $Q_n$, suppose $V_{k-1}$, $V_{k}$, and $V_{k+1}$ are three subsets of vertices of $Q_n$, where $V_m$ are the vertex set corresponding to vertices whose Hamming weight of the binary representation equals $m$, the spectral radius of the induced subgraph of $Q_n[V_{k-1} \cup V_{k} \cup V_{k+1}]$ is 
$\sqrt{2k (n-k) + n}$.
\label{lemma:spectral-radius-of-3-layer-subgraph-hypercube}
\end{lemma}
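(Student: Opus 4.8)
The plan is to exploit the layered structure of the induced subgraph together with the coordinate-permutation symmetry of the hypercube. First I would observe that $Q_n[V_{k-1}\cup V_k\cup V_{k+1}]$ is tripartite: since adjacent vertices of $Q_n$ differ in exactly one bit and hence in Hamming weight by one, there are no edges inside any single level, and the only edges run between $V_{k-1}$ and $V_k$ and between $V_k$ and $V_{k+1}$. A direct count of neighbors gives the key degree data: a vertex of weight $k-1$ has $n-k+1$ neighbors in $V_k$ (flip any of its $n-k+1$ zeros); a vertex of weight $k$ has $k$ neighbors in $V_{k-1}$ and $n-k$ neighbors in $V_{k+1}$; and a vertex of weight $k+1$ has $k+1$ neighbors in $V_k$. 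These counts depend only on the level, so the partition $\{V_{k-1},V_k,V_{k+1}\}$ is equitable.

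Next I would pass to the associated $3\times 3$ quotient matrix
\begin{align}
    Q = \begin{pmatrix} 0 & n-k+1 & 0 \\ k & 0 & n-k \\ 0 & k+1 & 0 \end{pmatrix},
\end{align}
whose eigenvalues form a subset of those of the adjacency matrix $A$. Expanding $\det(Q-\lambda \mathbbm{1})$ along the first row yields the characteristic polynomial $-\lambda\left(\lambda^2 - \left[k(n-k+1)+(n-k)(k+1)\right]\right)$, and since $k(n-k+1)+(n-k)(k+1) = 2k(n-k)+n$, the eigenvalues of $Q$ are $0$ and $\pm\sqrt{2k(n-k)+n}$. Equivalently, one may posit the level-constant ansatz with values $(a,b,c)$ on $(V_{k-1},V_k,V_{k+1})$ and eliminate $a,c$ through the outer equations to arrive at the same quadratic $\lambda^2 = 2k(n-k)+n$.

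It then remains to confirm that this largest quotient eigenvalue is genuinely the spectral radius of the full subgraph, and this identification is the step demanding the most care. I would first verify that the subgraph is connected for $0<k<n/2$: each consecutive-level bipartite graph is connected via the standard swap argument that realizes the connected Johnson graph $J(n,k)$ on the middle level, and the two bipartite pieces share the nonempty level $V_k$. Connectivity makes $A$ a nonnegative irreducible matrix, so by the Perron--Frobenius theorem its spectral radius $\rho(A)$ is a simple eigenvalue admitting a strictly positive eigenvector that is unique up to scaling. The symmetric group $S_n$ acts by permuting coordinates as graph automorphisms and transitively on each level; by uniqueness of the Perron eigenvector it must be fixed by this action, hence constant on each level. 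Such an eigenvector descends to an eigenvector of $Q$ with eigenvalue $\rho(A)$, and since $\rho(A)>0$ the only admissible value is $\rho(A)=\sqrt{2k(n-k)+n}$, as claimed. The main obstacle is precisely this final identification, namely pinning the positive top eigenvector to the level-constant subspace through irreducibility and the transitivity of the $S_n$-action; the degree bookkeeping and the quadratic are routine.
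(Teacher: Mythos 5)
Your proof is correct, and it follows a genuinely different route from the paper's. You work with the adjacency matrix $A$ directly: the three levels form an equitable partition, the $3\times 3$ quotient matrix has spectrum $\{0,\pm\sqrt{2k(n-k)+n}\}$, and you identify $\rho(A)$ with the top quotient eigenvalue via connectivity, simplicity of the Perron root, and the transitive $S_n$-action on each level. The paper instead squares the matrix: since the graph is bipartite with parts $V_k$ and $V_{k-1}\cup V_{k+1}$, the matrix $A^2$ is block diagonal; the $V_k$ block has constant row sum $2k(n-k)+n$ (hence that is its norm), and the norm of the other block is obtained from a positive eigenvector taking one value $x$ on $V_{k-1}$ and another $y$ on $V_{k+1}$, giving $\|A\|=\sqrt{\|A^2\|}$. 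The trade-off: the paper's argument never needs connectivity, because a nonnegative matrix possessing a positive eigenvector automatically has that eigenvalue as its spectral radius, whereas your uniqueness-of-Perron-vector argument genuinely requires irreducibility, which you correctly supply through the Johnson-graph swap argument; conversely, your quotient-matrix computation is more systematic and exhibits the relevant spectrum in one determinant, rather than the paper's two separate block computations. Note that you could also bypass connectivity entirely in your framework: lift the (positive) Perron vector of the quotient matrix $Q$ to a level-constant positive vector on the graph; it is an eigenvector of $A$ with eigenvalue $\sqrt{2k(n-k)+n}$, and positivity alone then forces this to equal $\rho(A)$. Finally, your degree bookkeeping is right ($\deg(v_{k+1})=k+1$ into $V_k$); the paper's proof states $\deg(v_{k+1})=k-1$, which is a typo, though its subsequent computation uses the correct value.
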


\begin{proof}    
We know that $Q_n[V_{k-1} \cup V_{k} \cup V_{k+1}]$ is a bipartite graph from the definition. 
Let $v_m \in V_m$.
We also know that $\text{deg}(v_k) = n$, $\text{deg}(v_{k-1}) = n-k+1$, and $\text{deg}(v_{k+1}) = k-1$.
Letting $A$ denote the adjacency matrix of the induced subgraph $Q_n[V_{k-1} \cup V_{k} \cup V_{k+1}]$, we can order the rows and the columns such that the first $|V_{k}|$ rows and columns correspond to $V_{k}$, the next $|V_{k-1}|$ rows correspond to $V_{k-1}$ and similarly for $V_{k+1}$.
We then see that $A^2$ produces a block matrix which we can separate into five blocks $A_{k,k}$, $A_{k-1,k-1}$, $A_{k-1,k+1}$, $A_{k+1,k-1}$, and $A_{k+1,k+1}$,
\begin{align}
A^2 = 
\begin{pmatrix}
  \begin{matrix}
  A_{k, k}
  \end{matrix}
  & \rvline &  \\
  \hline
  & \rvline &
  \begin{matrix}
  A_{k-1, k-1} & A_{k-1, k+1} \\
  A_{k+1, k-1} & A_{k+1, k+1}
  \end{matrix}
\end{pmatrix}.
\end{align}
In addition, we know that $\|A_{k, k}\| = 2nk - 2k^2 + n$ since it has a constant row sum, from the number of ways to walk from a vertex $u \in V_k$ and end up at $v \in V_k$ in two steps.
For the other block, since the Perron-Frobenius theorem says that the vector which induces the norm is positive, we get
\begin{align}
  \frac{x}{y} &= \frac{x (kn + k - k^2)  + y (n^2 - 2nk + n - k + k^2)}{x (k^2 + k)  + y (kn + n - k - k^2)} \\
              &= \frac{n - (k-1)}{k+1},
\end{align}
where $x$ and $y$ are elements of an eigenvector of length $\binom{n}{k-1} + \binom{n}{k+1}$ where the first $\binom{n}{k-1}$ entries are $x$ and the rest are $y$.
Then we get that the spectral norm of this block is
\begin{align}
  \left\|\begin{matrix}
  A_{k-1, k-1} & A_{k-1, k+1} \\
  A_{k+1, k-1} & A_{k+1, k+1}
  \end{matrix} \right\|
  &= kn + k - k^2 + \frac{y}{x}(n^2 -2nk + n - k + k^2) \\
  &= kn + k - k^2 + \frac{k+1}{n-k+1}(n^2 - 2nk + n - k + k^2) \\
  &= 2 k (n - k) + n.
\end{align}
Since the spectral norms of both diagonal block matrices are equal, this means that the spectral radius of the induced subgraph $Q_n[V_{k-1} \cup V_{k} \cup V_{k+1}]$ is exactly $\sqrt{2k (n-k) + n}$.
\end{proof}

\end{document}